\newtheorem{theorem}{Theorem}[section]
\newtheorem{lemma}[theorem]{Lemma}
\newtheorem{proposition}[theorem]{Proposition}
\newtheorem{remark}[theorem]{Remark}
\newtheorem*{theorem*}{Theorem}
\newtheorem*{prop*}{Proposition}
\newtheoremstyle{named}{}{}{\itshape}{}{\bfseries}{.}{.5em}{\thmnote{#3}#1}
\theoremstyle{named}
\newcommand{\bea}{\begin{eqnarray}}
\newcommand{\eea}{\end{eqnarray}}
\def\beaa{\begin{eqnarray*}}
\def\eeaa{\end{eqnarray*}}
\newcommand{\MM}{\mathcal{M}}
\def\DD{{\mathcal D}}
\def\TT{{\mathcal T}}
\def\PP{\mathcal{P}}
\def\D{{\bf D}}
\def\R{{\bf R}}
\def\g{{\bf g}}
\def\CCC{{\Bbb C}}
\def\RR{\mathcal{R}}
\def\EE{\mathcal{E}}
\def\a{{\alpha}}
\def\b{{\beta}}
\def\De{\Delta}
\def\la{\lambda}
\def\vphi{\varphi}
\def\th{\theta}
\def\nab{\nabla}
\def\etab{{\underline \eta}}
\def\rhod{\,\dual\rho}
\def\qf{\frak{q}}
\def\pf{\mathfrak{p}}
\def\Ffr{\mathfrak{F}}
\def\Bfr{\mathfrak{B}}
\def\Xfr{\mathfrak{X}}
\def\sk{\mathfrak{s}}
\def\Mor{{\mbox{Mor}}}
\def\pr{{\partial}}
\def\les{\lesssim}
\def\c{\cdot}
\def\dual{{\,\,^*}}
\def\div{{\mbox div\,}}
\def\Hb{\,\underline{H}}
\def\Ab{\underline{A}}
\def\hot{\widehat{\otimes}}
\def\squared{\dot{\square}}
\def\lab{\label}
\def\DDov{\ov{\DD}}
\def\nn{\nonumber}
\def\ov{\overline}
\def\DDs{ \, \DD \hspace{-2.4pt}\dual    \mkern-20mu /}
\def\DDs{ \, \DD \hspace{-2.4pt}\dual    \mkern-20mu /}
\def\DDd{ \, \DD \hspace{-2.4pt}    \mkern-8mu /}
\def\Kh{\,^{(h)}K}
      \def\ntrap{trap\mkern-18 mu\big/\,}
\def\Mntrap{{\MM_{\ntrap}}}
\def\NN{\mathcal{N}}
\def\QQ{\mathcal{Q}}
\def\LL{\mathcal{L}}
\def\EE{\mathcal{E}}
\def\AA{\mathcal{A}}
\def\VV{\mathcal{V}}
\def\FF{\mathcal{F}}
\def\UU{\mathcal{U}}
\def\piX{\, ^{(X)}\pi}
\def\That{{\widehat{T}}}
\def\Si{\Sigma}
\def\Up{\Upsilon}
\def\g{{\bf g}}
\def\Rdot{\dot{\R}}
\def\Db{{\dot{\D}}}
\newcommand{\mathsym}[1]{{}}
\newcommand{\unicode}[1]{{}}
\begin{document}

 \title{Boundedness and Decay for the Teukolsky System \\ in Kerr-Newman Spacetime II: The Case $|a| \ll M$, $|Q| <M$ \\ in Axial Symmetry}
 
 \author[1]{Elena Giorgi\footnote{elena.giorgi@columbia.edu}}
 
 \author[2]{{Jingbo Wan\footnote{jingbowan@math.columbia.edu}}}
 
\affil[1,2]{\small Department of Mathematics, Columbia University}

\maketitle

\begin{abstract}
We establish boundedness and polynomial decay results for the Teukolsky system in the exterior spacetime of very slowly rotating and strongly charged sub-extremal Kerr-Newman black holes, with a focus on axially symmetric solutions. The key step in achieving these results is deriving a physical-space Morawetz estimate for the associated generalized Regge-Wheeler system, without relying on spherical harmonic decomposition.
\end{abstract}

{\footnotesize

\setcounter{tocdepth}{2}
\tableofcontents}

\section{Introduction}

In recent years, great attention has been given to stability problems for the Einstein equation and significant progress has been made in the case of the Kerr spacetime through the analysis of the Teukolsky equation, particularly in understanding the decay properties of solutions \cite{Ma20a}\cite{DHR19a}\cite{SC20}\cite{SC23}\cite{GKS}. Here we focus on the less-studied case of the Teukolsky-like coupled system in Kerr-Newman spacetimes, solution to the Einstein-Maxwell equation. 

In the case of coupled gravitational and electromagnetic perturbations of a charged black hole, the intricate nature of the coupling
creates major difficulties in the study of the resulting equations, whose identification and analysis have long been an unresolved issue, even for mode stability results\footnote{For an approach to the linear stability of Kerr-Newman without the use of the Teukolsky system see \cite{Lili}.} \cite{Chandra}. Nevertheless,
it is possible to identify gauge invariant quantities satisfying a coupled system of Teukolsky-like equations which fully describe the coupled perturbations of Kerr-Newman black hole, as done by the first named author in \cite{Giorgi7}. For more details on the history of perturbations of charged black holes we direct the readers to the introduction of \cite{Giorgi9}.

This paper is Part II of a series aimed to obtain boundedness and decay for the Teukolsky system  derived in \cite{Giorgi7} for linearized gravitational and electromagnetic perturbations of slowly rotating Kerr-Newman black holes. 
The purpose of the present paper is to obtain decay estimates for the Teukolsky variables for axially symmetric perturbations of very slowly rotating and strongly charged black holes, i.e. in the case of $0< |a| \ll M$, $|Q| <M$.

A rough version of our main result is the following (see already Theorem \ref{theorem:unconditional-result-final} for the precise statement):

 \begin{theorem}[Main Theorem, rough version]
\lab{main-thm-intro-1}
Let $g_{M, a, Q}$ be a Kerr-Newman metric with $0< |a| \ll M$, $0<|Q|<M$, so that $a^2+Q^2<M^2$. Axisymmetric solutions to the Teukolsky system in Kerr-Newman spacetime $g_{M, a, Q}$ arising from regular initial data remain uniformly bounded in the exterior region and satisfy pointwise decay estimates.
\end{theorem}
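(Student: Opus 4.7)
The plan is to reduce the Teukolsky system to a generalized Regge--Wheeler (gRW) system, establish a physical-space Morawetz estimate for that system exploiting axial symmetry, and then transfer the resulting integrated local energy decay back to the Teukolsky variables together with an $r^p$-hierarchy argument in the style of Dafermos--Rodnianski.

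First I would invoke the Chandrasekhar-type transformation of \cite{Giorgi7} which takes the coupled Teukolsky system for the gauge invariant quantities into a generalized Regge--Wheeler system of the schematic form
\beaa
\squared_2 \psi_1 \;-\; V_1 \psi_1 &=& \mathcal{C}_1[\psi_2, \psi_1] + \lot, \\
\squared_1 \psi_2 \;-\; V_2 \psi_2 &=& \mathcal{C}_2[\psi_1, \psi_2] + \lot,
\eeaa
where $\squared_s$ is the spin-weighted wave operator, $V_i$ are effective potentials, and $\mathcal{C}_i$ are coupling operators of order at most one, with lower-order remainders that are $O(a)$-small. The coupling is the genuinely new feature compared to the Kerr case, and it sits at principal order, so it must be absorbed carefully rather than treated perturbatively at top order.

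Next, and this is the heart of the argument, I would derive a combined Morawetz estimate for the pair $(\psi_1,\psi_2)$ directly in physical space, without decomposing in spheroidal harmonics. The strategy is to apply a vector field multiplier of Morawetz type $X = f(r) \pr_r$ with a carefully chosen radial function $f$ vanishing at the trapping radius $r = r_{\mathrm{trap}}$ associated to the charged photon sphere, and to add the usual Lagrangian correction $w(r)\psi$ to absorb zeroth-order terms. The Morawetz bulk for each $\psi_i$ will degenerate at $r_{\mathrm{trap}}$, and the main difficulty is precisely that the two photon radii for $\psi_1$ and $\psi_2$, and the potentials $V_1, V_2$, are different (because of the spin-dependent potential), so one needs a simultaneous pair of multipliers $(X_1, X_2)$ whose cross-coupling errors through $\mathcal{C}_i$ can be absorbed on the right-hand side by the other equation. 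Axial symmetry is crucial here: in the Schwarzschild-like ($a=0$) case superradiance is absent, and for $|a| \ll M$ the Hawking--Reall-type vector field $T + \omega_{\mathcal H} \Phi$ furnishes a positive energy current on the horizon, so that the $a$-corrections to the Morawetz estimate can be treated perturbatively in $|a|/M$ even though $|Q|$ is allowed to be comparable to $M$.

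With the degenerate integrated local energy decay estimate in hand, the remaining steps are by now standard: a redshift estimate near $\mathcal H^+$ upgrades the degeneration at the horizon, commutation with $T$ and with the redshift vector field removes the trapping degeneration at the expense of one derivative, and a Dafermos--Rodnianski $r^p$-hierarchy on a foliation of outgoing null cones yields polynomial decay of the energy, and hence pointwise decay via Sobolev embedding on the spheres. Finally one inverts the Chandrasekhar transformation and uses transport equations along the null generators to recover the corresponding bounds and decay for the original Teukolsky variables from those for $(\psi_1,\psi_2)$.

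The main obstacle I expect is the construction of the coupled Morawetz multiplier: one must simultaneously dominate two coupled wave equations with distinct photon spheres, distinct potentials, and first-order coupling, and do so in physical space without the mode-by-mode orthogonality that is usually exploited in the separated problem. Controlling the sign of the resulting quadratic form, uniformly in $|Q|/M$ up to the sub-extremal threshold $a^2+Q^2 < M^2$, is where the genuine work lies; everything downstream of this estimate follows the now-classical blueprint.
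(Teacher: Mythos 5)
Your overall architecture is right — Chandrasekhar transform to a generalized Regge--Wheeler pair, a physical-space Morawetz estimate avoiding spherical harmonics, then redshift/$r^p$/transport machinery — and that matches the paper. But the way you frame the central obstacle, and therefore the shape of the multiplier you would try to build, is wrong, and this is not a small point.

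You claim the main difficulty is that ``the two photon radii for $\psi_1$ and $\psi_2$\ldots are different (because of the spin-dependent potential), so one needs a simultaneous pair of multipliers $(X_1,X_2)$.'' This is a misconception. Trapping is a high-frequency, principal-symbol phenomenon; the potentials $V_1,V_2$ are zeroth order and do not shift the photon sphere. Both gRW equations share the \emph{same} trapping radius $r_P$, the largest root of $\TT=r^3-3Mr^2+(a^2+2Q^2)r+Ma^2$, and in axial symmetry trapping further collapses to the single hypersurface $\{r=r_P\}$. Consequently the paper does \emph{not} need two multipliers with distinct vanishing loci: it uses a single vectorfield $Y=\FF(r)\partial_r$ with $\FF=zu$ vanishing at $r_P$, applied to a \emph{combined} current $\PP^{(Y,w_Y)}[\psi_1]+8Q^2\,\PP^{(Y,w_Y)}[\psi_2]$ with a fixed $Q$-dependent weight. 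Your proposed route (two multipliers with different sign changes, trading cross-terms between the equations) would be fighting a problem that is not there, while leaving the actual problem untouched.

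The actual problem, which your plan does not engage with, is twofold. First, the first-order coupling terms $C_1[\psi_2]$ and $C_2[\psi_1]$ involve the angular operators $\ov{\DD}\c$ and $\DD\hot$, which do not commute with spherical harmonic projection, so the positivity of the Morawetz bulk cannot be checked mode by mode; the paper handles this in physical space by combining the adjoint identity (Lemma~\ref{lemma:adjoint-operators}, trading $\psi_1\c(\DD\c\ov{\psi_2})$ for $(\DD\hot\psi_1)\c\ov{\psi_2}$ plus a divergence), the Poincar\'e and div/curl elliptic estimates on spheres, and a completing-the-square argument with free parameters $p_1,p_2$ distributing the angular gain. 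Second, coercivity of the resulting zeroth-order coefficients $A_1,A_2$ up to $|Q|<M$ requires a carefully chosen $u$; the paper's key novelty is the charge-interpolated choice
\[
u=\Big(1-\frac{Q^2}{M^2}\Big)(r^2-r_P^2)+\frac{Q^2}{M^2}\,\frac{r^3-r_P^3}{r},
\]
which interpolates between multipliers that work in Schwarzschild and in extremal Reissner--Nordstr\"om and, notably, yields positivity \emph{without} any Hardy inequality (positivity is checked by an explicit polynomial computation in Lemma~\ref{lemma:positivity-RN}). Your proposal also invokes a Hawking--Reall vectorfield $T+\omega_{\mathcal{H}}\Phi$; in axial symmetry $\Phi$ annihilates the solution, so $T$ alone suffices, as the paper uses. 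The downstream steps (redshift, $r^p$, transport back to the Teukolsky variables) you describe are essentially what the paper imports from \cite{Giorgi9}, so that part is fine; the gap is in identifying and solving the correct core difficulty.
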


 In Part I of this series \cite{Giorgi9}, the first named author has obtained an analogue of our main theorem for very slowly rotating and very weakly charged black holes, i.e. in the case of $0< |a|, |Q| \ll M$, with no symmetry assumption. Finally, in Part III of this series, we shall extend the result to $0< |a| \ll M$, $|Q| <M$ with no symmetry assumptions.

As in the case of scalar linear wave equations, axially symmetric solutions present two major simplifications: superradiance is effectively absent and the trapping region collapses to a physical-space hypersurface $\{r=r_{P}\}$, where $r_P$ is the largest root of the trapping polynomial $\TT= r^3-3Mr^2 + ( a^2+2Q^2)r+Ma^2$. For these reasons, the standard energy conservation law can be obtained independently of the integrated local energy decay (Morawetz) estimates, which are degenerate at the trapping radius $\{r=r_P\}$ (see \cite{DR10}\cite{Stogin17} for results for $|a|<M$ in axial symmetry). 
    On the other hand, for $a \neq 0$ outside axial symmetry the trapping region is an open region of spacetime, and this causes the energy estimates to be conditional on the control of the spacetime integrated local energy norms, making even the issue of boundedness of the solutions highly non trivial \cite{DR11b}. However, those spacetime Morawetz norms can be controlled in physical-space by higher order norms as in Andersson-Blue's method \cite{AB15}, which makes it possible to obtain  physical-space estimates for $|a|\ll M$.
 Outside axial symmetry, frequency decompositions are used in the known result for $|a|<M$   \cite{DRSR}, see also \cite{Tataru}. 
  
For the case of Teukolsky-like equations, as in previous works in vacuum \cite{Ma20}\cite{Ma20a}\cite{DHR19a}\cite{SC20}\cite{GKS}\cite{SC23}, our proof relies on the use of derived quantities from the Teukolsky variables, through a version of the Chandrasekhar transformation, which satisfy an (improved) system called generalized Regge-Wheeler system (gRW), also coupled to the original Teukolsky variables. Due to the similarity between the (generalized) Regge-Wheeler equation and the wave equation, the strategy of the proof is to obtain energy and Morawetz estimates for the gRW system, which also imply bounds on the Teukolsky variables. Unlike in Reissner-Nordstr\"om or Kerr, the gRW system in Kerr-Newman involves coupling of the derived quantities through elliptic operators which do not commute with the decomposition in spherical harmonics, so we are presented with 
the additional difficulty of having to analyze the system in physical-space\footnote{For the connection between this issue and the failure of mode stability see the introduction of \cite{Giorgi7}.}.  In particular, Part I of this series \cite{Giorgi9} makes use of an adaptation of the physical-space method by Andersson-Blue \cite{AB15}, which does not require mode decomposition but needs higher order energy norms and whose applicability to the charged case was laid out in \cite{Giorgi8}.

The coupled gRW system was first analyzed in Reissner-Nordstr\"om ($a=0$) for $|Q|\ll M$ \cite{Giorgi5}\cite{Giorgi4} and then for $|Q|<M$ \cite{Giorgi7a}. Crucial to extending the results to the full subextremal range was the uncovering of a symmetric structure in the elliptic operators appearing as coupling, which allowed for a definition of a modified energy shown to be coercive for $|Q|<M$. Observe that, even though not necessary, the proof in \cite{Giorgi7a} relies on the decomposition in spherical harmonics, which makes it difficult to extend it to the case of Kerr-Newman for the reasons outlined above. We also point out that for $a=0$ the Regge-Wheeler system completely decouples from the Teukolsky variables. For the analysis of the system in the extremal case $|Q|=M$, also making use of decomposition in spherical harmonics, see \cite{Apetroaie23}. 

In the case of Kerr-Newman, the gRW system was analyzed for $|a|, |Q|\ll M$ in Part I of this series \cite{Giorgi9}, without relying on decomposition in spherical harmonics. In order to obtain (conditional) energy estimates, a modified version of the symmetric structure of the coupling first uncovered in Reissner-Nordstr\"om \cite{Giorgi7a} was shown to extend to the case of Kerr-Newman, allowing to obtain energy estimates for $|a| \ll M$ and $|Q|<M$ in \cite{Giorgi9}. Notice that in \cite{Giorgi9} the restriction to small charge $|Q|\ll M$ is not only due to the choice of functions in the Morawetz multiplier, but is also more critically needed to obtained commuted energy estimates with the higher order operators due to an interesting interaction between $a\neq 0$ and $Q\neq 0$ in the commuted equations, see Remark 4.3 in \cite{Giorgi9}.

 In this paper, by restricting to axial symmetry, we avoid the issue of the commuted energy estimates necessary in Andersson-Blue's method \cite{AB15}, and make use of the fact that the integrated local energy norms of the solutions can be controlled in terms of the standard energy. In particular, the conditional energy estimates obtained in \cite{Giorgi9} for $|a| \ll M$ and $|Q|<M$ can be straightforwardly applied in this case, see Proposition  \ref{prop:energy-estimates-conditional}. 
For this reason, the main result of this paper is the derivation of the Morawetz-type estimate for axially symmetric solutions to the gRW system, which are valid for $|Q|<M$ and whose proof does not rely on the decomposition in spherical harmonics.

\subsection*{The choice of multiplier}

The study of Morawetz-type estimates has been crucial in analyzing the stability of various spacetimes, beginning with Morawetz's seminal work on the wave equation in Minkowski spacetime \cite{Morawetz61}. In the case of the linear wave equation in black hole spacetimes, the integrated local energy decay estimates need to degenerate at trapping \cite{Sbierski} because of the presence of bounded null geodesics in the spacetime.

In this paper, we derive the Morawetz estimates for the axially symmetric gRW equation through the use of a  novel choice of multiplier, which depends on the charge parameter, but which is independent on any mode decomposition. This is to be compared with previous choices appeared in Schwarzschild, for example in \cite{BS03}\cite{DR09} (whose proof makes use of the decomposition in spherical harmonics\footnote{In \cite{DR07}, new techniques avoided the need for spherical harmonics decomposition by using higher-order energy estimates and commuting with angular momentum operators.}) and \cite{DHR19}\cite{HKW20}, which is independent of mode decomposition.  
See also \cite{HMV} for choices in various spherically symmetric spacetimes.  For choices in extremal Reissner-Nordstr\"om see \cite{Aretakis11} \cite{Aretakis11a} and for choices in axial symmetry in subextremal Kerr see Stogin \cite{Stogin17}. For the case of extremal Kerr in axial symmetry see \cite{Aretakis12} for a choice through frequency decomposition, and \cite{GiorgiWan22} by the authors for a proof in physical-space, following Stogin's choice of multiplier. 
Purely physical-space analysis have also been obtained in nonlinear contexts in \cite{KS20}\cite{DHRT21}\cite{GKS}. 
Notice that most of these choices make use of a Hardy inequality to obtain positivity of the spacetime energy obtained.

Since we focus on slowly rotating, strongly charged sub-extremal Kerr-Newman spacetimes, it suffices by continuity to obtain the desired Morawetz estimate for the full sub-extremal Reissner-Nordstr\"om case\footnote{For the analysis of Regge-Wheeler system on sub-extremal Reissner-Nordstr\"om, we do not need to assume axial symmetry.}, without making use of any spherical harmonics decomposition. Our strategy for selecting multipliers in sub-extremal Reissner-Nordstr\"om spacetimes is novel compared to the existing literature. We employ both elliptic estimates on spheres and the method of completing the square to avoid the use of spherical harmonics decomposition. Additionally, we do not need to use any Hardy inequality, ensuring that the final coercivity property depends linearly on a carefully chosen scalar function.

We make use of the multiplier
\begin{equation}
    Y=\frac{\De}{(r^2+a^2)^2}\Big[\big(1-\frac{Q^2}{M^2}\big) (r^2-r_P^2)+\frac{Q^2}{M^2}\frac{r^3-r_P^3}{r} \Big] \partial_r,
\end{equation}
where $r_P$ is the trapping radius. The choice of the multiplier can be thought of as an interpolation between a multiplier that works in the Schwarzschild case ($Y=\frac{\De}{(r^2+a^2)^2}(r^2-r_P^2)\partial_r$) and one that works in the extremal Reissner-Nordstr\"om case ($Y=\frac{\De}{(r^2+a^2)^2}\frac{r^3-r_P^3}{r}\partial_r$).
This choice allows to obtain a positive definite spacetime norm degenerate at trapping, without the need of a Hardy inequality. 
This, combined with the energy estimates in \cite{Giorgi9}, implies the desired boundedness and pointwise decay for the axially symmetric Teukolsky system in the range $|a|\ll M, |Q|<M$.

\medskip

{\bf Acknowledgments:} The first author acknowledges the support of NSF Grants No. DMS-2306143, NSF CAREER grant DMS-2336118 and of a grant of the Simons Foundation (825870, EG).

\section{Preliminaries}

In this section, we recall the main properties of the Kerr-Newman spacetimes and the Teukolsky and gRW system governing coupled gravitational-electromagnetic perturbations of Kerr-Newman. For more details, see Section 2 of \cite{Giorgi9}.

\subsection{The Kerr-Newman metric}\label{sec:KN-metric}

We study the (stationary, axisymmetric) black hole exterior of the Kerr-Newman black hole, whose metric in Boyer-Lindquist coordinates $(t, r,  \th, \phi)$ takes the form 
\bea\label{metric-KN}
g_{a, Q, M}=-\frac{\Delta}{|q|^2}\left( dt- a \sin^2\th d\phi\right)^2+\frac{|q|^2}{\Delta}dr^2+|q|^2 d\th^2+\frac{\sin^2\th}{|q|^2}\left(a dt-(r^2+a^2) d\phi \right)^2,
\eea
where
\beaa
q=r+ i a \cos\th \label{definition-q}, \quad |q|^2=r^2+a^2\cos^2\th, \quad \Delta = (r-r_{+}) (r-r_{-}), \quad r_{\pm}=M\pm \sqrt{M^2-a^2-Q^2},
\eeaa
for each $a$, $Q$, $M$ satisfying $a^2+Q^2 <M^2 $.
We recall that the ambient manifold with boundary $\MM$ is diffeomorphic to $\mathbb{R}^{+} \times \mathbb{R} \times \mathbb{S}^2$, and the metric \eqref{metric-KN} in Kerr star coordinates (see for example \cite{DHR19a}), with $\frac{dr_*}{dr}=\frac{1}{\Delta}$, extends smoothly to the event horizon $\mathcal{H}^+$ defined as the boundary $\partial \MM=\{ r=r_{+}\}$.

We consider a foliation by axially symmetric, smooth spacelike hypersurfaces $\Sigma_\tau$ which connect the event horizon and future null infinity, defined as the time-translated $\Sigma_\tau=\phi_\tau (\Sigma_0)$ associated with the integral curve of the Killing vector field $T=\partial_t$. The region between $\Sigma_{\tau_1}$ and $\Sigma_{\tau_2}$ is denoted $\MM(\tau_1, \tau_2)$.

In the present paper, we will restrict to axially symmetric solutions, which allow crucial simplifications in the analysis. For example, the trapping region is only limited to the radius $\{r=r_P\}$ of trapped null geodesics with zero angular momentum, given by the largest root of the polynomial 
\cite{Giorgi8}
\bea\label{eq:definition-TT}
\TT:= r^3-3Mr^2 + ( a^2+2Q^2)r+Ma^2=0.
\eea 
Also, the ergoregion is practically absent and the Killing vectorfield $\partial_t$ is associated to a coercive energy current.

\subsection{The Teukolsky and the generalized Regge-Wheeler system}

In \cite{Giorgi7}, the first named author defined a set of gauge-invariant quantities associated to a linear electromagnetic-gravitational perturbation of the Kerr-Newman spacetime\footnote{An equivalent set of gauge invariant quantities, denoted by $\Ab, \mathfrak{\underline{F}}, \underline{\mathfrak{B}}, \underline{\mathfrak{X}}$, exist for tensors with negative signature.}:
\beaa
A, \qquad \Ffr, \qquad \mathfrak{B}, \qquad \mathfrak{X},
\eeaa
satisfying a coupled system of Teukolsky equations in Kerr-Newman spacetime,
for their exact definition\footnote{Observe that in the case of vacuum $A$ correspond to the complex tensorial version of the Teukolsky variable $\alpha^{[+2]}$ of spin $+2$. The other quantities do not have an analogue in the vacuum case.} see \cite{Giorgi7}. The one-form $\mathfrak{B}$ and the traceless two tensor $\mathfrak{F}$ admit two derived quantities $\pf$ and $\qf$, at the level of one derivative of $\mathfrak{B}$ and $\mathfrak{F}$ respectively through the so-called Chandrasekhar transformation, which satisfy a 
 generalized Regge-Wheeler system.

\begin{theorem}[Theorem 2.2 in \cite{Giorgi9}, see also Theorem 7.3 in \cite{Giorgi7}]\label{main-theorem-RW} Consider a linear electromagnetic-gravitational perturbation of Kerr-Newman spacetime with mass $M$, charge $Q$ and rotation $a$, and its associated complex gauge invariant quantities $A, \Ffr, \Bfr, \Xfr$.
Then the complex gauge-invariant quantities $\pf $ and $\qf$
 satisfy
 the following coupled system of wave equations:
\bea
 \squared_1\pf-i  \frac{2a\cos\th}{|q|^2}\nab_T \pf  -V_{1,f}  \pf &=&4Q^2 \frac{\ov{q}^3 }{|q|^5} \left(  \ov{\DD} \c  \qf \right) + L_1 \label{final-eq-1}\\
\squared_2\qf -i  \frac{4a\cos\th}{|q|^2}\nab_T \qf -V_{2,f}  \qf &=&-   \frac 1 2\frac{q^3}{|q|^5} \left(  \DD \hot  \pf  -\frac 3 2 \left( H - \Hb\right)  \hot \pf \right) + L_2\label{final-eq-2}
 \eea
where 
\begin{itemize}
\item $\squared_1$ and $\squared_2$ denote the wave operator for horizontal 1- and 2-tensors respectively,
\item $\nab_T$ denote the horizontal covariant derivative with respect to $T=\partial_t$,
\item $V_{1,f}$ and $V_{2,f}$ are real positive potentials explicitly given by
 \beaa
 V_{1,f}=\frac{1}{r^2}\big(1-\frac{2M}{r}+\frac{6Q^2}{r^2} \big)+O(a^2r^{-4}), \qquad V_{2,f}=\frac{4}{r^2}\big(1-\frac{2M}{r}+\frac{3Q^2}{2r^2} \big)+O(a^2r^{-4}),
 \eeaa
 \item $\DD\hot$ and $\DDov$ are horizontal operators given by
 \beaa
\DD\hot(f+i\dual f) &:=& (\nabla+i\dual\nabla)\hot(f+i\dual f)=2\big(\nab \hot f + i \dual (\nab \hot f)\big)\\
\DDov (u+i\dual u) &:=& (\nabla- i\dual\nabla) (u+i\dual u)= 2\big( \div u + i \dual(\div u) \big),
\eeaa
\item $H$ and $\Hb$ are complexified Ricci coefficients,
 \item $L_1$ and $L_2$ denote lower order terms depending on up to one derivatives of $\Bfr$, $\Ffr$, $A$, $\Xfr$.
 \end{itemize}
\end{theorem}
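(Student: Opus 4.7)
My plan is to derive this generalized Regge--Wheeler system as the output of a tensorial Chandrasekhar transformation applied to the Teukolsky-type equations for $\Bfr$ and $\Ffr$ established in \cite{Giorgi7}. The strategy is to first recall those Teukolsky equations, which are of wave type plus a first-order $\nab_T$ correction and an $r$-dependent potential, coupled to $A$ and $\Xfr$ with coefficients depending on $Q$ and $a$. Then I would define $\pf$ and $\qf$ as weighted null derivatives,
\[
\pf \;\sim\; \frac{1}{q}\,\nab_L\!\bigl(f_1(r)\,\Bfr\bigr), \qquad \qf \;\sim\; \frac{1}{q^2}\,\nab_L\!\bigl(f_2(r)\,\Ffr\bigr),
\]
schematically, where the radial weights $f_1, f_2$ and the complex prefactor involving $q = r + i a\cos\theta$ are chosen so that the conformal wave operator, when applied to $\pf, \qf$, produces the real potentials $V_{1,f}, V_{2,f}$ with precisely the $r$-expansion stated.

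The principal computation is then to apply $\squared_1, \squared_2$ to $\pf, \qf$, commute with the defining null derivative, and substitute the original Teukolsky equations to eliminate the second-order term. Commutation identities for horizontal derivatives on Kerr--Newman, combined with algebraic identities relating $\DD\hot$ and $\DDov$ to the complexified Ricci coefficients $H, \Hb$, should organize the output into four pieces. First, the principal parts $\squared_1 \pf - V_{1,f}\pf$ and $\squared_2 \qf - V_{2,f}\qf$ arise once the weights $f_1, f_2$ are properly tuned. Second, the first-order imaginary terms $-i(2a\cos\theta/|q|^2)\nab_T\pf$ and $-i(4a\cos\theta/|q|^2)\nab_T\qf$ are generated by the factor $q - \bar q = 2 i a \cos\theta$ and by the non-commutativity of $\nab_L$ with the Killing field $T$. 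Third, the coupling terms on the right-hand side come from applying the Chandrasekhar derivative to the $A$- and $\Xfr$-couplings of the original system, which assemble via Bianchi-type and Maxwell identities into $\DDov \cdot \qf$ and $\DD\hot \pf - \tfrac32(H-\Hb)\hot\pf$ with the stated charge-dependent coefficients $4Q^2 \bar q^3/|q|^5$ and $-\tfrac12 q^3/|q|^5$. Finally, all genuinely lower-order contributions in up to one derivative of $\Bfr, \Ffr, A, \Xfr$ are absorbed into the collectors $L_1, L_2$.

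The hard part is the algebraic bookkeeping required for all four outputs to fall into place simultaneously. In particular, the weights $f_1(r), f_2(r)$ must be tuned so that the leading and $O(Q^2/r^4)$ terms in $V_{1,f}, V_{2,f}$ agree with the claimed expressions, while the coefficient of $i\nab_T$ must come out with the precise $\cos\theta/|q|^2$ factor, and the cross-coupling must assemble into exactly $\DDov \cdot$ and $\DD\hot$ rather than into general horizontal derivatives. The most delicate interaction is between rotation and charge: the complex weight $q$ entangles $a$ with $Q$, so ensuring simultaneously that the potentials remain real, that no residual non-$T$ first-order derivatives survive, and that the coupling stays at the level of first-order horizontal differential operators requires the systematic use of the symmetric structure of the coupling first identified in Reissner--Nordstr\"om in \cite{Giorgi7a} and extended to Kerr--Newman in \cite{Giorgi9}.
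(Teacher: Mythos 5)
This theorem is not proved in the present paper: it is explicitly recalled as Theorem 2.2 in \cite{Giorgi9} and Theorem 7.3 in \cite{Giorgi7}, and the paper simply cites it as an input. There is therefore no ``paper's own proof'' here to compare against. Your sketch does, however, correctly identify the strategy used in \cite{Giorgi7}: define $\pf$ and $\qf$ as one-derivative Chandrasekhar-type transforms of $\Bfr$ and $\Ffr$ with suitably chosen $q$- and $r$-weights, apply $\squared_1,\squared_2$, commute with the defining null derivative, substitute the original Teukolsky equations, and then organize the output into the principal wave/potential part, the $i\,a\cos\theta\,|q|^{-2}\nab_T$ term coming from the imaginary part of $q$, the elliptic coupling terms $\DDov\cdot$ and $\DD\hot$, and the residual lower-order collectors $L_1,L_2$. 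This is in line with the description the present paper gives of the derived quantities.

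That said, as written your proposal is a plan rather than a proof. The entire content of the theorem is concentrated in the ``algebraic bookkeeping'' you defer: showing that a single choice of weights $f_1,f_2$ and complex prefactors simultaneously (i) produces real potentials with the stated $r^{-2}(1-2M/r+6Q^2/r^2)$ and $4r^{-2}(1-2M/r+\tfrac32 Q^2/r^2)$ leading terms with only $O(a^2 r^{-4})$ corrections, (ii) generates the $\nab_T$ first-order term with coefficient exactly $-2ia\cos\th/|q|^2$ (resp.\ $-4ia\cos\th/|q|^2$) and no other surviving first-order derivatives, and (iii) assembles the cross-coupling into precisely $4Q^2\bar q^3|q|^{-5}\,\DDov\cdot\qf$ and $-\tfrac12 q^3|q|^{-5}(\DD\hot\pf-\tfrac32(H-\Hb)\hot\pf)$. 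Without actually carrying this out, or at minimum specifying the weights and the commutation identities you rely on, one cannot certify that the four outputs ``fall into place simultaneously,'' which is exactly the nontrivial claim. To turn this into a verifiable argument you would need to state the precise Chandrasekhar transformation (as in \cite{Giorgi7}), record the commutator of $\squared$ with the transverse null derivative on Kerr--Newman, and track the $a$- and $Q$-dependence of every term; the schematic $\pf\sim q^{-1}\nab_L(f_1\Bfr)$, $\qf\sim q^{-2}\nab_L(f_2\Ffr)$ is not enough to do that.
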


In \cite{Giorgi9}, energy-Morawetz estimates for the gRW system \eqref{final-eq-1}-\eqref{final-eq-2} have been obtained for $|a|, |Q|\ll M$. Here, we restrict to axially symmetric solutions with $|a| \ll M$ and $|Q|<M$. Even though the lower order terms $L_1$ and $L_2$ are simplified in axial symmetry, we do not make advantage of such simplification since those were estimated in \cite{Giorgi9} for $|a| \ll M$, with no restriction on the charge, see Section \ref{sec:lot}.

\subsection{The combined energy-momentum tensor}

Following \cite{Giorgi9}, to study the gRW system \eqref{final-eq-1}-\eqref{final-eq-2}, we consider the following model problem for a one-form and a symmetric traceless two tensors $\psi_1$ and $\psi_2$ respectively:
\bea
 \squared_1\psi_1  -V_1  \psi_1 &=&i  \frac{2a\cos\th}{|q|^2}\nab_T \psi_1+4Q^2 C_1[\psi_2]+ N_1 \label{final-eq-1-model}\\
\squared_2\psi_2 -V_2  \psi_2 &=&i  \frac{4a\cos\th}{|q|^2}\nab_T \psi_2-   \frac {1}{ 2} C_2[\psi_1]+ N_2\label{final-eq-2-model}
 \eea
 where
 \bea\label{eq:potentials-model}
 V_1=\frac{1}{|q|^2}\big(1-\frac{2M}{r}+\frac{6Q^2}{r^2} \big), \qquad V_2=\frac{4}{|q|^2}\big(1-\frac{2M}{r}+\frac{3Q^2}{2r^2} \big),
 \eea
 and the coupling operators are given by 
 \bea\label{eq:C_1-C_2}
 C_1[\psi_2]=\frac{\ov{q}^3 }{|q|^5} \left(  \ov{\DD} \c  \psi_2  \right) , \qquad C_2[\psi_1]=\frac{q^3}{|q|^5} \left(  \DD \hot  \psi_1 -\frac 3 2 \left( H - \Hb\right)  \hot \psi_1 \right).
 \eea
Here $N_1$ and $N_2$ are for now some unspecified right hand sides of the equations.

The energy-momentum tensor for a complex horizontal tensor $\psi$ is defined as
\bea\label{def:energy-momentum-tensor}
\QQ[\psi]_{\mu\nu}:= \Re\big(\Db_\mu  \psi \c \Db _\nu \ov{\psi}\big)
          -\frac 12 \g_{\mu\nu} \big(\Db_\la \psi\c\Db^\la \ov{\psi} + V\psi \c \ov{\psi} \big),
\eea
where $\Re$ denotes the real part, $\Db$ is the projection to the horizontal structure of the covariant derivative. We also denote the Lagrangian $\LL[\psi]:=\Db_\la \psi\c\Db^\la \ov{\psi} + V\psi \c \ov{\psi}$.

 Let $\psi_1$ and $\psi_2$ be horizontal tensors satisfying the model equations \eqref{final-eq-1-model} and \eqref{final-eq-2-model}. We define the following \textit{combined energy-momentum tensor} for the system:
\bea
\QQ[\psi_1, \psi_2]_{\mu\nu}&:=& \QQ[\psi_1]_{\mu\nu}+8Q^2 \QQ[\psi_2]_{\mu\nu} \label{eq:def-combined-em}
\eea

Let $X$ be a vectorfield and $w$ a scalar\footnote{Observe that, in contrast with \cite{Giorgi9}, here we do not need to use a one form $J$ in the definition of the current.}, we define the following \textit{combined current} for the system:
\bea\label{eq:def-combined-current}
 \PP_\mu^{(X, w)}[\psi_1, \psi_2]&:=& \PP_\mu^{(X, w)}[\psi_1]+8 Q^2 \PP_\mu^{(X, w)}[ \psi_2],
\eea
where
 \bea\label{eq:definition-current}
 \PP_\mu^{(X, w)}[\psi]&:=&\QQ[\psi]_{\mu\nu} X^\nu +\frac 1 2  w \Re\big(\psi \c \Db_\mu \overline{\psi} \big)-\frac 1 4 \pr_\mu w |\psi|^2.
  \eea

  We recall here the structure of the divergence of the combined current from \cite{Giorgi9}. 

\begin{proposition}[Proposition 4.1 in \cite{Giorgi9}] \label{prop:general-computation-divergence-P} Let $\psi_1$ and $\psi_2$ satisfying the model system \eqref{final-eq-1-model}-\eqref{final-eq-2-model}. Then, the combined current defined in \eqref{eq:def-combined-current} satisfies the following divergence identity:
\bea\label{eq:divv-PP}
\begin{split}
\D^\mu \PP_\mu^{(X, w)}[\psi_1, \psi_2]&= \EE^{(X, w)}[\psi_1, \psi_2]+\mathscr{N}_{first}^{(X, w)}[\psi_1,\psi_2]+\mathscr{N}_{coupl}^{(X, w)}[\psi_1,\psi_2]\\
&+\mathscr{N}_{lot}^{(X, w)}[\psi_1,\psi_2]+\mathscr{R}^{(X)}[\psi_1, \psi_2],
\end{split}
\eea
where 
\begin{itemize}
\item the bulk term $\EE^{(X, w)}[\psi_1, \psi_2]$ is given by 
\beaa
\EE^{(X, w)}[\psi_1, \psi_2]&:=& \EE^{(X, w)}[\psi_1] +8Q^2 \EE^{(X, w)}[\psi_2]
\eeaa
where
 \bea\label{eq:EE-X-w-J}
 \EE^{(X, w)}[\psi]  &:=& \frac 1 2 \QQ[\psi]  \c\piX - \frac 1 2 X( V ) |\psi|^2+\frac 12  w \LL[\psi] -\frac 1 4 \square_\g  w |\psi|^2,
 \eea

 \item the term $\mathscr{N}_{first}$ involving the first order term on the RHS of the equations is given by
\bea\label{eq:definition-N-first}
\mathscr{N}_{first}^{(X, w)}[\psi_1,\psi_2]:= - \frac{2a\cos\th}{|q|^2} \Im\Big[ \big(\nabla_X\ov{\psi_1} +\frac 1 2   w \ov{\psi_1}\big)\c  \nab_T \psi_1+ 16Q^2\big(\nabla_X\ov{\psi_2} +\frac 1 2   w \ov{\psi_2}\big)\c  \nab_T \psi_2\Big],
\eea

\item the term $\mathscr{N}_{coupl}$ involving the coupling terms on the RHS of the equations is given by
 \bea\label{eq:N-coupl-1}
 \begin{split}
\mathscr{N}_{coupl}^{(X, w)}[\psi_1,\psi_2]&:=4Q^2 \Re\Big[ \big( \frac{ q^3}{|q|^5} w -X(\frac{ q^3}{|q|^5}) \big)\psi_1 \c(\DD \c\ov{\psi_2} ) +\frac{ q^3}{|q|^5} \psi_1 \c ([\DD \c,\nabla_X]\ov{\psi_2}  \big) \Big] \\
  &-\D_\a \Re \Big(\frac{ 4Q^2q^3}{|q|^5}\psi_1 \c \big(\nabla_X\ov{\psi_2} +\frac 1 2   w \ov{\psi_2} \big) \Big)^\a+\nab_X \Re\Big(\frac{ 4Q^2q^3}{|q|^5}\psi_1 \c  (\DD \c\ov{\psi_2}) \Big),
  \end{split}
\eea
or also, equivalently,
 \bea\label{eq:N-coupl-2}
 \begin{split}
 \mathscr{N}_{coupl}^{(X, w)}[\psi_1,\psi_2]&=4Q^2 \Re\Bigg[\big( X( \frac{q^3 }{|q|^5})-\frac{q^3 }{|q|^5}   w \big) (\DD\hot \psi_1 \big) \c   \ov{\psi_2}-   \frac{q^3 }{|q|^5} ([\DD\hot,  \nabla_X]\psi_1)  \c   \ov{\psi_2} \\
&-\Big(\Big(\big( X( \frac{q^3 }{|q|^5})-\frac{q^3 }{|q|^5}   w \big)\frac 3 2(  H-\Hb)+\frac{ q^3}{|q|^5}  \frac 3 2  \nab_X(  H-\Hb)\Big)\hot  \psi_1\Big)\c   \ov{\psi_2} \Bigg]\\
&+\D_\a \Re\Big( \frac{4Q^2q^3 }{|q|^5}  \big(\nabla_X\psi_1 +\frac 1 2   w \psi_1\big) \c \ov{\psi_2}\Big)^\a\\
&-\nab_X\Re\Big(  \frac{4Q^2q^3 }{|q|^5} (\DD\hot \psi_1 \big) \c   \ov{\psi_2}-\frac{ 4Q^2q^3}{|q|^5}  \frac 3 2  ((  H-\Hb)\hot   \psi_1 )\c   \ov{\psi_2}\Big).
\end{split}
\eea

\item the term $\mathscr{N}_{lot}$ involving the lower order terms on the RHS of the equations is given by
\bea\label{eq:definition-N-lot}
\mathscr{N}_{lot}^{(X, w)}[\psi_1,\psi_2]&=&  \Re\Big[ \big(\nabla_X\ov{\psi_1} +\frac 1 2   w \ov{\psi_1}\big)\c N_1+8Q^2  \big(\nabla_X\ov{\psi_2} +\frac 1 2   w \ov{\psi_2}\big)\c N_2\Big],
\eea
 
 \item the curvature term $\mathscr{R}$ is given by
 \bea\label{eq:definition-R-X}
 \mathscr{R}^{(X)}[\psi_1, \psi_2]&:=&  \Re\Big[ X^\mu \Db^\nu  \psi_1^a\Rdot_{ ab   \nu\mu}\ov{\psi_1}^b+ 8Q^2 X^\mu \Db^\nu  \psi_2^a\Rdot_{ ab   \nu\mu}\ov{\psi_2}^b\Big].
 \eea
\end{itemize}

The above Proposition uses crucially the following.

 \begin{lemma}[Lemma 2.11 in \cite{Giorgi7}]\label{lemma:adjoint-operators}
 For $F=f+i\dual f \in \sk_1(\CCC)$ and $U=u+i\dual u \in \sk_2(\CCC)$, we have
   \bea
 ( \DD \hot   F) \c   \ov{U}  &=&  -F \c (\DD \c \ov{U}) -( (H+\Hb ) \hot F )\c \ov{U} +\D_\a (F \c \ov{U})^\a.
 \eea
 \end{lemma}

\end{proposition}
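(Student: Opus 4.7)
The plan is to first establish a single-field divergence identity for $\PP_\mu^{(X,w)}[\psi]$ when $\psi$ is a horizontal tensor solving a generic equation $\squared\psi - V\psi = F$, and then apply it to $\psi_1$ and $\psi_2$ with their specific right-hand sides from \eqref{final-eq-1-model}-\eqref{final-eq-2-model}, summing with the weight $8Q^2$ prescribed by \eqref{eq:def-combined-current}. The bulk $\EE^{(X,w)}[\psi_1,\psi_2]$, the curvature $\mathscr{R}^{(X)}[\psi_1,\psi_2]$, and the lower-order and first-order terms $\mathscr{N}_{lot}^{(X,w)}$, $\mathscr{N}_{first}^{(X,w)}$ will then appear essentially automatically. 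The substantive step is rewriting the remaining coupling residue in the two equivalent forms \eqref{eq:N-coupl-1}-\eqref{eq:N-coupl-2}, which uses Lemma \ref{lemma:adjoint-operators}.

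For the master identity, I would split $\D^\mu \PP_\mu^{(X,w)}[\psi]$ according to the three summands in \eqref{eq:definition-current}. The piece $\D^\mu(\QQ[\psi]_{\mu\nu}X^\nu) = \tfrac12\QQ[\psi]\cdot\piX + X^\nu\D^\mu\QQ[\psi]_{\mu\nu}$ is handled by the standard stress-energy computation: one uses the symmetrisation $\Re(\Db_\mu\psi\cdot\Db^\mu\Db_\nu\ov\psi) = \tfrac12\D_\nu(\Db_\mu\psi\cdot\Db^\mu\ov\psi) + \Re(\Db_\mu\psi^a\Rdot_a{}^{b\mu}{}_\nu\ov\psi_b)$, the commutator $[\Db^\mu,\Db_\nu]\ov\psi^a = \Rdot^a{}_b{}^\mu{}_\nu\ov\psi^b$, and the equation $\squared\psi = V\psi + F$, so that the $V\Re(\psi\cdot\Db_X\ov\psi)$ terms cancel and one is left with $-\tfrac12 X(V)|\psi|^2 + \mathscr{R}^{(X)}[\psi] + \Re(\Db_X\ov\psi\cdot F)$. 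The two $w$-summands combine, via the product rule and $\squared\ov\psi = V\ov\psi + \ov F$, into $\tfrac12 w\LL[\psi] - \tfrac14\squared w|\psi|^2 + \tfrac12 w\Re(\ov\psi\cdot F)$. Adding produces the master single-field identity
$$
\D^\mu\PP_\mu^{(X,w)}[\psi] \;=\; \EE^{(X,w)}[\psi] + \mathscr{R}^{(X)}[\psi] + \Re\!\bigl((\Db_X\ov\psi+\tfrac12 w\ov\psi)\cdot F\bigr).
$$

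Applying this identity to $\psi_1$ and to $\psi_2$ (the latter weighted by $8Q^2$) and summing: the $\EE$ and $\mathscr{R}$ pieces combine exactly as in \eqref{eq:EE-X-w-J} and \eqref{eq:definition-R-X}; the $N_i$ contributions form $\mathscr{N}_{lot}^{(X,w)}$; and the purely imaginary $i\,\tfrac{2a\cos\th}{|q|^2}\nab_T\psi_1$ and $i\,\tfrac{4a\cos\th}{|q|^2}\nab_T\psi_2$ give $\mathscr{N}_{first}^{(X,w)}$ through $\Re(iz) = -\Im(z)$, the $16Q^2$ coefficient of the $\psi_2$ piece arising from $8Q^2\cdot 2$ after factoring the common $2$ in $2a\cos\th/|q|^2$. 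The remaining coupling residue, after using $\ov{\DDov\psi_2} = \DD\ov\psi_2$ to complex-conjugate the $A$-piece, reads
\begin{eqnarray*}
A+B &=& 4Q^2\,\Re\!\Bigl(\tfrac{q^3}{|q|^5}(\Db_X\psi_1 + \tfrac12 w\psi_1)\cdot(\DD\cdot\ov\psi_2)\Bigr) \\
&& -\,4Q^2\,\Re\!\Bigl(\tfrac{q^3}{|q|^5}\bigl(\DD\hot\psi_1 - \tfrac32(H-\Hb)\hot\psi_1\bigr)\cdot(\Db_X\ov\psi_2 + \tfrac12 w\ov\psi_2)\Bigr).
\end{eqnarray*}
To reach form \eqref{eq:N-coupl-1} I would push $\nab_X$ off $\psi_1$ in the first line by applying the product rule to the scalar $\tfrac{q^3}{|q|^5}\psi_1\cdot(\DD\cdot\ov\psi_2)$, producing the $X(\tfrac{q^3}{|q|^5})$ correction, a commutator $[\DD\cdot,\nab_X]\ov\psi_2$ piece, and an exact $\nab_X$; simultaneously apply Lemma \ref{lemma:adjoint-operators} to the second line to transfer $\DD\hot$ off $\psi_1$ onto the factor $\Db_X\ov\psi_2 + \tfrac12 w\ov\psi_2$, which generates the explicit $\D_\a$-divergence visible in \eqref{eq:N-coupl-1} together with $(H+\Hb)$-residues that combine with the $\tfrac32(H-\Hb)$ contribution from $B$. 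Form \eqref{eq:N-coupl-2} is obtained by the mirror procedure: push $\nab_X$ off $\ov\psi_2$ in the second line and use Lemma \ref{lemma:adjoint-operators} in the opposite direction to transfer $\DD\cdot$ in the first line back onto $\psi_1$ as $\DD\hot$.

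The main obstacle is the $H$-bookkeeping across the two applications of Lemma \ref{lemma:adjoint-operators}, which inject $(H+\Hb)\hot$ terms while the operator $C_2[\psi_1]$ intrinsically carries a $(H-\Hb)\hot\psi_1$; ensuring these combine consistently with the commutators $[\DD\cdot,\nab_X]$ and $[\DD\hot,\nab_X]$ to land exactly on the $\nab_X(H-\Hb)$ correction in \eqref{eq:N-coupl-2} is a direct but careful computation rather than a conceptual hurdle.
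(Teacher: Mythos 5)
Your proposal is correct and follows essentially the same route as the source: the paper itself imports this proposition from Part I (\cite{Giorgi9}) without reproving it, and the proof there is exactly your master single-field divergence identity applied to $\psi_1$ and $8Q^2\psi_2$, with the coupling residue rewritten via Lemma \ref{lemma:adjoint-operators} and the product/commutator rules to produce \eqref{eq:N-coupl-1} and \eqref{eq:N-coupl-2}. Your identification of the $H$-bookkeeping as the only delicate point is apt — the $(H+\Hb)$ terms injected by the lemma, the $\DD$-derivative of the weight $q^3/|q|^5$, and the intrinsic $\tfrac32(H-\Hb)$ in $C_2$ cancel by the designed adjointness of $C_1$ and $C_2$, which is why no residual $H$ terms appear in \eqref{eq:N-coupl-1}.
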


\subsection{Elliptic identities and estimates}

Recall the following elliptic identities  for a 1-tensor $\xi$ and a 2-tensor $\chi$ (see Lemma 4.8.1 in \cite{GKS} and Section 4.4 in \cite{Giorgi9}):
\beaa
 |\nab  \xi   |^2-\Kh |\xi|^2&=2|\DDs_2   \xi   |^2-  \frac{2a\cos\th}{|q|^2}\dual  \nab_T  \xi  \c \xi+\D_\a \big( \nab^\a \xi \c \xi+2 (\DDs_2 \xi)^{\a\b} \xi_{\b} \big)\\
|\nab    \chi |^2+ 2 \Kh |\chi|^2 &=2 |\DDd_2   \chi  |^2+  \frac{2a\cos\th}{|q|^2}\dual  \nab_T  \chi  \c \chi+\D_\a \big( \nab^\a \chi \c \chi-2 (\div \chi)_\b \chi^{\a\b} \big),
\eeaa
and the respective complex analogue:
\bea
 |\nab \psi_1  |^2-\Kh |\psi_1 |^2&=&\frac 1 2 |\DD \hot  \psi_1  |^2+  \frac{2a\cos\th}{|q|^2} \Im(  \nab_T  \psi_1  \c\ov{\psi_1})+\D_\a\Re \big( \nab^\a \psi_1\c \ov{\psi_1}+ (\DD \hot  \psi_1) \c \ov{\psi_1} \big),\label{eq:elliptic-estimates-psi1}\\
 |\nab \psi_2  |^2+2\Kh |\psi_2 |^2&=&\frac 1 2 |\ov{\DD}\c  \psi_2  |^2-  \frac{2a\cos\th}{|q|^2} \Im(  \nab_T  \psi_2  \c\ov{\psi_2})+\D_\a\Re \big( \nab^\a \psi_2\c \ov{\psi_2}+ (\DD \hot  \psi_2) \c \ov{\psi_2} \big).\label{eq:elliptic-estimates-psi2}
\eea

We also recall the following Poincar\'e inequality, involving the integral on the topological spheres in Kerr-Newman spanned by $(\theta, \phi)$.

\begin{lemma}[Lemma 7.2.3 in \cite{GKS} or Lemma 4.5 in \cite{Giorgi9}]\lab{lemma:poincareinequalityfornabonSasoidfh:chap6}
For $\psi_1$ a one-form, we have 
\beaa
\int_S|\nab\psi_1|^2 &\geq& \frac{1}{r^2}\int_S|\psi_1|^2  -O(a)\int_S\big(|\nab\psi_1|^2+r^{-2}|\nab_T\psi_1|^2+r^{-4}|\psi_1|^2\big).
\eeaa
For $\psi_2$ a symmetric traceless 2-tensor, we have 
\beaa
\int_S|\nab\psi_2|^2 &\geq& \frac{2}{r^2}\int_S|\psi_2|^2  -O(a)\int_S\big(|\nab\psi_2|^2+r^{-2}|\nab_T\psi_2|^2+r^{-4}|\psi_2|^2\big).
\eeaa
\end{lemma}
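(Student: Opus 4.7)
The natural strategy is to integrate each of the complex elliptic identities \eqref{eq:elliptic-estimates-psi1} and \eqref{eq:elliptic-estimates-psi2} over the topological 2-sphere $S$, discard the manifestly non-negative terms, and read off the leading constant from the expansion $\Kh = \tfrac{1}{r^2} + O(a^2/r^4)$ of the Gauss curvature of the 2-spheres in Kerr-Newman.

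\emph{Step 1 (one-form case).} I would rewrite \eqref{eq:elliptic-estimates-psi1} as
\begin{equation*}
|\nab\psi_1|^2 \;=\; \Kh\,|\psi_1|^2 + \tfrac{1}{2}|\DD\hot\psi_1|^2 + \tfrac{2a\cos\theta}{|q|^2}\Im(\nab_T\psi_1\c\ov{\psi_1}) + \D_\a\Re\big(\nab^\a\psi_1\c\ov{\psi_1}+(\DD\hot\psi_1)\c\ov{\psi_1}\big),
\end{equation*}
integrate over $S$, and drop the non-negative term $\tfrac{1}{2}|\DD\hot\psi_1|^2\ge 0$. Using $\Kh \ge \tfrac{1}{r^2} - O(a^2/r^4)$ produces the main term $\tfrac{1}{r^2}\int_S|\psi_1|^2$. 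The $\nab_T$-term is prefixed by $a$ and Cauchy--Schwarz on $S$ absorbs it into $O(a)\int_S\bigl(r^{-2}|\nab_T\psi_1|^2+r^{-2}|\psi_1|^2\bigr)$. The key technical observation is that when $a=0$ the horizontal distribution is integrable and coincides with $TS$, so the 4D divergence $\D_\a V^\a$ reduces to a pure surface divergence on $S$ and the integrated divergence term vanishes by Stokes; for $a\neq 0$ the discrepancy between $\D_\a V^\a$ and $\div_S V$ contributes error terms of the form $O(a)\int_S(|\nab\psi_1|^2 + r^{-2}|\nab_T\psi_1|^2 + r^{-4}|\psi_1|^2)$.

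\emph{Step 2 (symmetric traceless 2-tensor case).} From \eqref{eq:elliptic-estimates-psi2} one obtains
\begin{equation*}
|\nab\psi_2|^2 \;=\; -2\Kh\,|\psi_2|^2 + \tfrac{1}{2}|\ov{\DD}\c\psi_2|^2 - \tfrac{2a\cos\theta}{|q|^2}\Im(\nab_T\psi_2\c\ov{\psi_2}) + \D_\a(\ldots).
\end{equation*}
Here the Gauss-curvature term has the unfavourable sign, so discarding $\tfrac{1}{2}|\ov{\DD}\c\psi_2|^2$ is not admissible. Instead I would invoke the sharp Poincar\'e bound $\int_{S_r}|\ov{\DD}\c\psi_2|^2 \ge \tfrac{8}{r^2}\int_{S_r}|\psi_2|^2$ on the round sphere of radius $r$, which follows from the spin-weight-$2$ spherical harmonic decomposition with equality attained at the lowest mode $l=2$ (the Bochner eigenvalue $l(l+1)-s^2$ gives $2/r^2$ for $s=l=2$, and this matches the net constant $\tfrac{1}{2}\c\tfrac{8}{r^2}-\tfrac{2}{r^2}=\tfrac{2}{r^2}$). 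The $\nab_T$ and divergence terms are handled exactly as in Step~1.

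\emph{Main obstacle.} The most delicate step is controlling the integrated divergence $\int_S \D_\a(\ldots)$ in Kerr-Newman for $a\neq 0$: because the principal horizontal distribution is non-integrable, the difference between the spacetime divergence and the surface divergence on the coordinate sphere $S$ produces corrections involving the twist coefficient of the null frame and $\nab_T$-derivatives of the tensors, which must be shown to be of size $O(a)$ and of exactly the form $|\nab\psi_i|^2$, $r^{-2}|\nab_T\psi_i|^2$, $r^{-4}|\psi_i|^2$ appearing in the stated error. A secondary issue in Step~2 is that the spectral bound $\tfrac{8}{r^2}$ used on the round sphere must be shown to persist on the Kerr-Newman sphere with only an $O(a)$ loss; this is what prevents the final constant $\tfrac{2}{r^2}$ from degenerating to zero and is the reason $\Kh$ and $|\ov{\DD}\c\psi_2|^2$ must be handled together rather than separately.
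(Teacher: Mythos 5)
The paper does not prove this lemma: it cites it directly from Lemma~7.2.3 of \cite{GKS} (equivalently Lemma~4.5 of \cite{Giorgi9}), so there is no in-house proof against which to compare your argument. That said, your sketch follows the standard route for such sphere-wise Poincar\'e inequalities in the GKS/Giorgi framework, and the main strategic decisions are correct. Step~1 is essentially right and self-contained: for a one-form, the Gauss-curvature term in \eqref{eq:elliptic-estimates-psi1} enters with the favourable sign, so one may simply discard the manifestly non-negative $\tfrac{1}{2}|\DD\hot\psi_1|^2$ and read off $\Kh\geq \tfrac{1}{r^2}-O(a^2 r^{-4})$. Step~2 correctly diagnoses that the same move fails for 2-tensors because $+2\Kh|\psi_2|^2$ has the wrong sign, and your substitute — the spectral bound $\int_S|\ov{\DD}\c\psi_2|^2\geq \tfrac{8}{r^2}\int_S|\psi_2|^2$ fed back through the elliptic identity — gives the stated constant $\tfrac{2}{r^2}$, which I verified directly: on the round sphere the minimizer $\chi=\nab\nab f+\tfrac{3f}{r^2}g$ with $\Delta f=-\tfrac{6}{r^2}f$ realizes $\int_S|\div\chi|^2=\tfrac{2}{r^2}\int_S|\chi|^2$, translating to $\tfrac{8}{r^2}$ under the complexification convention $|\ov{\DD}\c U|^2=8|\div u|^2$, $|U|^2=2|u|^2$.

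Two remarks. First, your detour through $\ov{\DD}\c$ does not genuinely reduce the difficulty: the elliptic identity only transmutes the Poincar\'e inequality for $\nab$ into an equivalent one for $\div$ at the same lowest mode, and it is cleaner (and what \cite{GKS} does) to establish $\int_S|\nab\chi|^2\geq \tfrac{2}{r^2}\int_S|\chi|^2$ on the round sphere directly and then perturb — the elliptic identity is then only needed elsewhere, e.g.\ to derive \eqref{eq:elliptic-nablapsi2-divpsi2}. Your parenthetical ``Bochner eigenvalue $l(l+1)-s^2$'' is also slightly muddled as a justification for the $\tfrac{8}{r^2}$ constant: $l(l+1)-s^2=2$ for $l=s=2$ is the eigenvalue of the rough Laplacian on the traceless symmetric 2-tensor, i.e.\ the constant $\tfrac{2}{r^2}$ you are \emph{concluding}, whereas the operator you are actually bounding is $\ov{\DD}\c$, whose lowest eigenvalue you should quote as $(l+s)(l-s+1)=4$ in the spin-weighted normalization. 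The final arithmetic is nonetheless correct. Second, and more substantively, you have identified but not resolved the genuinely hard part: the estimate $\int_S \D_\a V^\a = O(a)\int_S(\cdots)$ with precisely the stated weights, which requires careful accounting of the $O(a)$ mismatch between the horizontal distribution and $TS$ in Kerr-Newman, together with the corresponding $O(a)$ stability of the lowest eigenvalue. As a proof sketch your proposal is sound; as a full proof it stands or falls on that deferred computation, which is exactly where the cited references invest most of their effort.
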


Combining the above, we also obtain
     \bea\label{eq:elliptic-nablapsi2-divpsi2}
      \int_S     |\nab \psi_2  |^2            &\geq& \frac 1 4 \int_{S}   |\ov{\DD}\c  \psi_2  |^2 -O(a)\int_S\big(|\nab\psi_2|^2+r^{-2}|\nab_T\psi_2|^2+r^{-2}|\psi_2|^2\big),
     \eea
We will use the notation  $\geq_S$ to denote that the inequality holds upon integration on spheres, as above.

\section{Statement of the Theorem and overview of the proof}

We define the following energies on $\Sigma_\tau$:
 \beaa
 E[\pf , \qf](\tau) &:=&\int_{\Si_\tau} \left( |\nab_T\pf|^2+ |\nab_T\qf|^2 +   |\nab_{\pr_{r_*}}\pf|^2 +   |\nab_{\pr_{r_*}}\qf|^2+|\nab\pf|^2 +|\nab\qf|^2 + r^{-2} \big( |\pf|^2  + |\qf|^2\big)\right), 
 \eeaa
 and
 \beaa
E[\Bfr, \Ffr](\tau)&:=&\int_{\Si_\tau}  r^{8}\big(|\nab_4\Bfr|^2+|\nab_{3}\Bfr|^2+r^{-1} |\nab \Bfr|^2+r^{-2}|\Bfr|^2\big) \\
&&+\int_{\Si_\tau}r^{4}\big(|\nab_4\Ffr|^2+|\nab_{3}\Ffr|^2+r^{-1} |\nab \Ffr|^2+r^{-2}|\Ffr|^2\big), \\
E[A, \Xfr](\tau)&:=& \int_{\Si_\tau} r^{3}\big(   |\nab_3A|^2+  |\nab A|^2+ r^{-1}  |A|^2+ |\nab_3\Xfr|^2+ |\nab \Xfr|^2+r^{-1} |\Xfr|^2\big).
\eeaa
Here $\nab_3$ and $\nab_4$ denote the projected covariant derivative with respect to the ingoing and outgoing null directions respectively.

We denote the combined energy 
\beaa
E[\Bfr, \Ffr, A, \Xfr](\tau)&:=&  E[\Bfr, \Ffr](\tau)+E[A, \Xfr](\tau), \\
E[\pf, \qf, \Bfr, \Ffr, A, \Xfr](\tau)&:=&  E[\pf , \qf](\tau)+E[\Bfr, \Ffr](\tau)+E[A, \Xfr](\tau).
\eeaa

We also define the following spacetime energies on $\MM(\tau_1, \tau_2)$:
 \beaa
      \Mor^{ax}[\pf, \qf](\tau_1, \tau_2)&:=&\int_{\MM(\tau_1, \tau_2)}   r^{-3} \big(  |\nab_{\pr_{r_*}}\pf|^2+ |\nab_{\pr_{r_*}}\qf|^2+|\pf|^2+|\qf|^2\big) \\
     && +\int_{\MM(\tau_1, \tau_2)}\left(1-\frac{r_P}{r}\right)^2  \Big(  r^{-2} |\nab_{T} \pf|^2+  r^{-2} |\nab_{T} \qf|^2 +r^{-1} |\nab \pf|^2+ r^{-1} |\nab \qf|^2  \Big)
 \eeaa
and 
 \beaa
B[\Bfr, \Ffr](\tau_1, \tau_2)&:=& \int_{\MM(\tau_1, \tau_2)}     r^{7} \big(|\nab_4\Bfr|^2+|\nab_3\Bfr|^2+|\nab \Bfr|^2+r^{-2}|\Bfr|^2\big)\\
&&+\int_{\MM(\tau_1, \tau_2)} r^{3} \big(|\nab_4\Ffr|^2+|\nab_3\Ffr|^2+|\nab \Ffr|^2+r^{-2}|\Ffr|^2\big)\\
B[A, \Xfr](\tau_1, \tau_2)&:=& \int_{\MM(\tau_1, \tau_2)} r^{3}\big(  |\nab_3A|^2+  |\nab A|^2+ r^{-2} |A|^2+ |\nab_3\Xfr|^2+ |\nab \Xfr|^2+r^{-2} |\Xfr|^2\big).
\eeaa
We denote the combined spacetime energy 
\beaa
B[\Bfr, \Ffr, A, \Xfr](\tau_1, \tau_2)&:=&  B[\Bfr, \Ffr](\tau_1, \tau_2)+B[A, \Xfr](\tau_1, \tau_2), \\
B[\pf, \qf, \Bfr, \Ffr, A, \Xfr](\tau_1, \tau_2)&:=&  \Mor^{ax}[\pf , \qf](\tau_1, \tau_2)+B[\Bfr, \Ffr](\tau_1, \tau_2)+B[A, \Xfr](\tau_1, \tau_2).
\eeaa

We can finally state the precise statement of our theorem.

\begin{theorem}
\lab{Thm:Nondegenerate-Morawetz}\lab{theorem:unconditional-result-final}

Let $\Bfr, \Ffr, A, \Xfr$ be \textbf{axisymmetric} solutions of the Teukolsky system and $\pf, \qf$ their Chandrasekhar transformed solutions the gRW system \eqref{final-eq-1}-\eqref{final-eq-2}. Then, for $|a| \ll M$, $0<|Q| <M$  so that $a^2+Q^2<M^2$ the following energy boundedness and integrated local energy decay estimates hold true:
 \bea\label{eq:final-estimate-theorem}
E[\pf, \qf, \Bfr, \Ffr, A, \Xfr](\tau)+B[\pf, \qf, \Bfr, \Ffr, A, \Xfr] (0, \tau) \les  E[\pf, \qf, \Bfr, \Ffr, A, \Xfr](0).
       \eea
\end{theorem}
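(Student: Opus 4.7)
The conditional energy estimate of Proposition \ref{prop:energy-estimates-conditional} from \cite{Giorgi9} already controls $E[\pf, \qf](\tau)$ and the Teukolsky energies in terms of their initial data \emph{plus} the Morawetz bulk $\Mor^{ax}[\pf, \qf]$, uniformly for $|a| \ll M$ and $|Q| < M$. Consequently, the entire proof of \eqref{eq:final-estimate-theorem} reduces to establishing an unconditional integrated local energy decay estimate $\Mor^{ax}[\pf,\qf](0,\tau) \lesssim E[\pf,\qf,\Bfr,\Ffr,A,\Xfr](0)$ for axisymmetric solutions of the gRW system \eqref{final-eq-1}--\eqref{final-eq-2}. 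Since the $a$-dependence is perturbative once axial symmetry is imposed, the plan is to first prove the Morawetz bound in the sub-extremal Reissner-Nordstr\"om case $a=0$ (where no symmetry assumption is needed), and then extend it by continuity to $|a| \ll M$, treating the additional $O(|a|)$ contributions as error terms absorbable by the main bulk.

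The core step is to apply the divergence identity \eqref{eq:divv-PP} to $\PP^{(Y,w)}[\pf,\qf]$, with $Y$ the multiplier from the introduction and $w(r)$ a carefully tuned scalar weight. After integration by parts the principal bulk $\EE^{(Y,w)}$ in \eqref{eq:EE-X-w-J} is designed to produce a pointwise positive coefficient on $|\nab_{\pr_{r_*}}\pf|^2+|\nab_{\pr_{r_*}}\qf|^2$ (non-degenerate), and on the angular and $T$-derivative contributions (degenerate at $r=r_P$). The novelty is that the zeroth-order terms $-\frac{1}{2}Y(V_i)|\psi_i|^2 - \frac{1}{4}\squared w\,|\psi_i|^2$ are rendered coercive \emph{without} a Hardy inequality: the interpolation of the Schwarzschild factor $(r^2-r_P^2)$ with the extremal-Reissner-Nordstr\"om factor $(r^3-r_P^3)/r$ weighted by $Q^2/M^2$ furnishes exactly the additional zeroth-order positivity needed as $|Q|$ approaches $M$, so the coercivity of the resulting bulk depends linearly on a single scalar coefficient rather than on a mode-by-mode spectral gap.

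The coupling terms $\mathscr{N}_{coupl}^{(Y,w)}$, involving $\pf\c\ov{\DD}\c\qf$ and $(\DD\hot\pf)\c\qf$ as in \eqref{eq:N-coupl-1}--\eqref{eq:N-coupl-2}, are treated without any spherical harmonic decomposition as follows. One uses the elliptic identities \eqref{eq:elliptic-estimates-psi1}--\eqref{eq:elliptic-estimates-psi2} to trade the angular-derivative terms $|\nab\pf|^2$ and $|\nab\qf|^2$ appearing in $\EE^{(Y,w)}$ for the half-norms $\frac{1}{2}|\DD\hot\pf|^2$ and $\frac{1}{2}|\ov{\DD}\c\qf|^2$ modulo $\Kh|\psi_i|^2$ corrections and $a\cos\theta\,\nab_T$ remainders; these, combined with the Poincar\'e inequalities of Lemma \ref{lemma:poincareinequalityfornabonSasoidfh:chap6} and the adjoint relation of Lemma \ref{lemma:adjoint-operators}, yield a coercive quadratic form in $\DD\hot\pf$ and $\ov{\DD}\c\qf$. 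Completing the square in this form then absorbs the full coupling contribution, leaving a manifestly non-negative Morawetz bulk with the expected degeneracy $(1-r_P/r)^2$ at trapping.

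Finally, in axial symmetry the term $\mathscr{N}_{first}^{(Y,w)}$ of \eqref{eq:definition-N-first} carries the factor $a\cos\theta/|q|^2$ and is therefore of order $|a|$, hence absorbable into the main bulk for $|a|\ll M$; the lower-order contribution $\mathscr{N}_{lot}$ involving $\Bfr,\Ffr,A,\Xfr$ is controlled exactly as in \cite{Giorgi9}, with no further restriction on $Q$. Combined with the conditional energy estimate, this produces the unconditional bound for $(\pf,\qf)$. The Teukolsky energies and bulks for $\Bfr,\Ffr,A,\Xfr$ are then recovered by integrating the transport equations defining the inverse Chandrasekhar transformation ($\pf\to\Bfr\to A$ and $\qf\to\Ffr\to\Xfr$) with Dafermos--Rodnianski $r^p$-weighted estimates, following \cite{Giorgi9}. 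The main obstacle throughout is the physical-space coercivity of the coupled Morawetz bulk over the full sub-extremal charge range $|Q|<M$: previous approaches in Reissner-Nordstr\"om required either spherical harmonic decomposition \cite{Giorgi7a} or a Hardy inequality, and the novelty here is that the interpolated multiplier $Y$ together with the completion-of-the-square argument on spheres circumvents both.
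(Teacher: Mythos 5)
Your proposal is correct and follows essentially the same route as the paper: conditional energy estimates from Proposition \ref{prop:energy-estimates-conditional}, a Morawetz multiplier $Y=z u\,\partial_r$ with the interpolated $u$ between the Schwarzschild and extremal Reissner-Nordstr\"om factors, coercivity of the bulk in the full sub-extremal Reissner-Nordstr\"om case via the elliptic identities \eqref{eq:elliptic-estimates-psi1}--\eqref{eq:elliptic-estimates-psi2}, the Poincar\'e inequality, the adjoint relation (Lemma \ref{lemma:adjoint-operators}), and completing the square (with no Hardy inequality), followed by perturbative absorption of the $O(|a|)$ errors and the transport estimates for the Teukolsky quantities from \cite{Giorgi9}. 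The only details you do not spell out that the paper uses are implementation-level: the particular parameter choices $p_1=\tfrac 2 5$, $p_2=\tfrac 1 2$, the explicit verification of $A_1,A_2>0$ in Lemma \ref{lemma:positivity-RN}, and the auxiliary scalar $(0,\delta_T w_T)$ used to recover a degenerate $T$-derivative term in the bulk.
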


The estimate \eqref{eq:final-estimate-theorem} was proved in Theorem 3.2 of \cite{Giorgi9} for any solution of the Teukolsky system (outside axial symmetry) in Kerr-Newman spacetimes with $|a|, |Q|\ll M$. As in \cite{Giorgi9}, \eqref{eq:final-estimate-theorem} can be easily upgraded with $r^p$-weighted and higher derivative energies.

\subsection{Overview of the proof}

We give here an overview of the proof.

As in \cite{Giorgi9}, we first consider solutions to the model system \eqref{final-eq-1-model}-\eqref{final-eq-2-model}, which admit the combined energy momentum tensor defined by \eqref{eq:def-combined-em}.

We first obtain energy estimates for the system. By applying the divergence theorem to the combined current $\PP_\mu^{(T, 0)}[\psi_1, \psi_2]$ we obtain a conditional energy of the system, still depending on the right hand side of the equations. Nevertheless,
the assumption of axial symmetry allows to make direct use of such conditional energy estimates as they can be bounded by the degenerate spacetime energy $\Mor^{ax}[\psi_1, \psi_2](0, \tau)$, see Proposition \ref{prop:energy-estimates-conditional}. This is recalled in Section \ref{sec:energy}.

The main part of the proof is then the derivation of the Morawetz estimates, i.e. the choice of a multiplier for the combined current $\PP_\mu^{(\FF(r) \partial_r, w(r))}[\psi_1, \psi_2]$ which gives a coercive spacetime energy norm. This is achieved with the choice $\FF(r)=z(r)u(r)$ with $z=\frac{\De}{(r^2+a^2)^2}$ and 
\beaa
u=\big(1-\frac{Q^2}{M^2}\big) (r^2-r_P^2)+\frac{Q^2}{M^2}\frac{r^3-r_P^3}{r}, \qquad w=\frac{\De}{(r^2+a^2)^2}\pr_r u.
\eeaa
Because of the right hand side of the equations in the system, the spacetime energy contains mixed terms, such as $\psi_1 \c (\DD \c \ov{\psi_2})$. We use elliptic estimates to obtain a coercive lower bound for the quadratic form, modulo lower order terms in $a$. This is done in Section \ref{sec:morawetz}.

Finally, we pass from the estimates for the model system to the ones for the gRW system by using previous controls on the lower order terms $L_1$ and $L_2$ obtained in Propositions 3.5 and 3.6 in \cite{Giorgi9}. This is recalled in Section \ref{sec:lot}.

\section{Proof of the Theorem}

Here we prove Theorem \ref{Thm:Nondegenerate-Morawetz}.

\subsection{Energy estimates}\label{sec:energy}

Energy estimates for the model system which hold true in the range $|a| \ll M$, $|Q|<M$ were obtained in \cite{Giorgi9}. We recall here the statement.

\begin{proposition}[Proposition 3.5 in \cite{Giorgi9}]\label{prop:energy-estimates-conditional} Let $\psi_1, \psi_2$ be solutions of the model system \eqref{final-eq-1-model}-\eqref{final-eq-2-model}. For $0<|Q|<M$, $|a|\ll M$, the following conditional energy boundedness estimate holds true:
 \bea
 \label{eq:energy-estimates-conditional}
 \begin{split}
E[\psi_1, \psi_2](\tau)   &\les  E[\psi_1, \psi_2](0)+|a|  \Mor^{ax}[\psi_1, \psi_2](0, \tau) \\
 &  +\left|\int_{\MM(0, \tau)}\Re\Big( \nabla_{T}\ov{\psi_1} \c N_1+8Q^2  \nabla_{T}\ov{\psi_2}  \c N_2\Big)\right|+\int_{\MM(0, \tau)}\Big( |N_1|^2+Q^2 |N_2|^2\Big).
 \end{split}
 \eea
\end{proposition}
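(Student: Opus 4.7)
The plan is to apply the spacetime divergence theorem to the combined current $\PP_\mu^{(T, 0)}[\psi_1, \psi_2]$ from \eqref{eq:def-combined-current} over $\MM(0, \tau)$, choosing the multiplier $X = T = \partial_t$ and weight $w = 0$, and then read off the bulk and boundary contributions via the identity \eqref{eq:divv-PP} of Proposition \ref{prop:general-computation-divergence-P}. Since $T$ is a Killing vectorfield and the potentials $V_1, V_2$ from \eqref{eq:potentials-model} together with the coupling factor $q^3/|q|^5$ are manifestly $t$-independent, we have $\piT = 0$, $T(V_i) = 0$, and $T(q^3/|q|^5) = 0$. Consequently the deformation term $\EE^{(T, 0)}[\psi_1,\psi_2]$ in \eqref{eq:EE-X-w-J} vanishes identically, and the non-divergence portion of the coupling term $\mathscr{N}_{coupl}^{(T, 0)}$ in \eqref{eq:N-coupl-1} collapses to the single expression $\tfrac{4Q^2 q^3}{|q|^5}\Re\big(\psi_1 \c [\DD \c, \nabla_T]\ov{\psi_2}\big)$, the remaining pieces being total spacetime divergences that only contribute to the boundary fluxes on $\Sigma_\tau$ and $\mathcal{H}^+$.

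For the first-order term, substituting $X = T$, $w = 0$ into \eqref{eq:definition-N-first} gives integrands of the form $-\tfrac{2a\cos\theta}{|q|^2}\Im\big(\nabla_T\ov{\psi_i} \c \nabla_T \psi_i\big)$, which vanish pointwise because $\overline{\nabla_T \psi_i} \c \nabla_T \psi_i = |\nabla_T \psi_i|^2$ is real. The curvature term $\mathscr{R}^{(T)}[\psi_1, \psi_2]$ in \eqref{eq:definition-R-X}, together with the coupling commutator identified above, then produces only $O(a)$ bulk corrections, the $a$-factor arising because at $a = 0$ the horizontal curvature and the projected commutators degenerate to the spherically symmetric Reissner-Nordstr\"om expressions, whose contributions either vanish or reduce to divergences. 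These $O(a)$ corrections are controlled pointwise by a constant times $|\nab \psi_i|^2 + r^{-2}|\psi_i|^2$, and therefore bounded in spacetime by $|a|\,\Mor^{ax}[\psi_1, \psi_2](0, \tau)$. The lower-order term $\mathscr{N}_{lot}^{(T, 0)}$ from \eqref{eq:definition-N-lot} reproduces precisely the expression $\Re\big(\nabla_T\ov{\psi_1} \c N_1 + 8Q^2 \nabla_T\ov{\psi_2}\c N_2\big)$ appearing as the third term on the right-hand side of \eqref{eq:energy-estimates-conditional}, while the residual $\int |N_i|^2$ integral arises from a Cauchy-Schwarz absorption of the commutator/boundary pieces that carry a single factor of $N_i$.

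On the boundary side, under the axial symmetry assumption $\nabla_\phi \psi_i = 0$, the azimuthal direction drops out of all flux integrals, so the coercivity of $\int_{\Sigma_\tau}\PP^{(T, 0)}[\psi_1, \psi_2] \c n_{\Sigma_\tau}$ as $E[\psi_1, \psi_2](\tau)$, up to an $O(a)$ defect bounded by $|a|\Mor^{ax}$, follows from the positivity of $\QQ[\psi_1] + 8Q^2 \QQ[\psi_2]$, and the horizon flux on $\{r = r_+\}$ has the required nonnegative sign because the would-be superradiant contribution involving the azimuthal generator drops out. The principal obstacles are twofold: first, ensuring that the combined energy-momentum tensor in \eqref{eq:def-combined-em} is coercive through the \emph{full} subextremal charge range $0 < |Q| < M$, which relies on the symmetric-coupling structure with coefficient $8Q^2$ first discovered in \cite{Giorgi7a} and extended to Kerr-Newman in \cite{Giorgi9}; and second, verifying that all $a$-dependent bulk remainders genuinely lie in the degenerate Morawetz norm $\Mor^{ax}$ rather than in a non-degenerate one. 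The latter point is what makes axial symmetry essential here: outside axial symmetry the analogous bound would demand higher-order commutations in the Andersson-Blue style \cite{AB15}, whereas under axisymmetry the absence of trapping in the angular direction means the $T$-energy bound is compatible with a $\Mor^{ax}$ right-hand side that degenerates only at $\{r = r_P\}$.
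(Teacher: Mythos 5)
Your proposal attempts a from-scratch derivation using the divergence theorem applied to $\PP_\mu^{(T,0)}[\psi_1,\psi_2]$, whereas the paper's own proof is a one-line reduction: Proposition~3.5 of \cite{Giorgi9} states the identical estimate with the non-axisymmetric Morawetz bulk $\Mor[\psi_1,\psi_2]$ on the right-hand side, and the paper simply observes that $\Mor^{ax}[\psi_1,\psi_2] \gtrsim \Mor[\psi_1,\psi_2]$, so the result transfers. The paper does, however, include a remark that confirms the structural observations you make: with $X=T$, $w=0$ one indeed has $\EE^{(T,0)}[\psi_1,\psi_2]=0$ (because $T$ is Killing and $V_i$ are $t$-independent) and $\mathscr{N}_{first}^{(T,0)}[\psi_1,\psi_2]=0$ (imaginary part of a real quantity), and you are right that the surviving non-divergence piece of $\mathscr{N}_{coupl}^{(T,0)}$ is the commutator term. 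So your derivation is in the spirit of what is done in \cite{Giorgi9} and the observations are sound.

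There is, however, a concrete gap in the step where you dispatch the $O(a)$ bulk remainders. You write that the $O(a)$ corrections from $\mathscr{R}^{(T)}$ and the commutator are ``controlled pointwise by a constant times $|\nab\psi_i|^2 + r^{-2}|\psi_i|^2$, and therefore bounded in spacetime by $|a|\Mor^{ax}[\psi_1,\psi_2](0,\tau)$.'' This inference is not valid as stated: in $\Mor^{ax}$ the angular-derivative and $T$-derivative densities carry the degeneracy factor $(1-r_P/r)^2$, which vanishes at the trapping radius. A non-degenerate bulk term of the form $O(a)|\nab\psi_i|^2$ (or $O(a)|\nab_T\psi_i|^2$) supported near $r=r_P$ therefore cannot be absorbed by $|a|\Mor^{ax}$. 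What saves the argument (and what the paper's remark indicates) is that the surviving $O(a)$ bulk terms are actually of the zero-derivative form $O(ar^{-5})|\psi_1||\psi_2|$, which are controllable by the non-degenerate $r^{-3}(|\psi_1|^2+|\psi_2|^2)$ portion of $\Mor^{ax}$. You yourself flag this as the ``second principal obstacle,'' but you do not resolve it; without verifying the precise derivative structure of the remainders (e.g.\ by integrating by parts in the curvature term to remove the $\Db^\nu\psi$ factor), the absorption step is incomplete.

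One smaller mislocation: the difficulty of obtaining coercivity through the full subextremal range $0<|Q|<M$ is not relevant to this particular proposition. The combined $T$-energy flux $\QQ[\psi_1]\cdot T + 8Q^2\QQ[\psi_2]\cdot T$ is manifestly a sum of nonnegative fluxes for any $Q$; the $|Q|<M$ constraint and the $8Q^2$ weighting become delicate only in the Morawetz multiplier estimate (positivity of $A_1,A_2$), which is a different proposition.
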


\begin{proof}
Observe that Proposition 3.5 in \cite{Giorgi9} is stated with the Morawetz bulk $\Mor[\psi_1, \psi_2]$, given by 
   \beaa
 \Mor[\psi_1, \psi_2](\tau_1, \tau_2)&=&\int_{\MM(\tau_1, \tau_2) } 
  \left(    r^{-2} \big( | \nab_{\pr_{r_*}}  \psi_1|^2+|\nab_{\pr_{r_*}} \psi_2|^2\big) +r^{-3}\big(|\psi_1|^2+  |\psi_2|^2\big) \right)\\
      &&+ \int_{\Mntrap(\tau_1, \tau_2)} \left(  r^{-2}|\nab_3\psi_1|^2+ r^{-2} |\nab_3 \psi_2|^2 + r^{-1}  |\nab  \psi_1|^2+ r^{-1} |\nab \psi_2|^2\right),
\eeaa
in lieu of the axially symmetric Morawetz bulk $\Mor^{ax}[\psi_1, \psi_2](0, \tau)$. Here, $\Mntrap$ denotes the region of the spacetime outside the open trapping region.
Since the axially symmetric Morawetz bulk controls the Morawetz bulk, i.e. $\Mor^{ax}[\psi_1, \psi_2](\tau_1, \tau_2)\gtrsim \Mor[\psi_1, \psi_2](\tau_1, \tau_2)$, the above is immediately implied by Proposition 3.5 in \cite{Giorgi9}.

Observe that even though some parts of the proof are simplified in axial symmetry by applying Proposition \ref{prop:general-computation-divergence-P} with $X=T$, $w=0$, such as $ \EE^{(T, 0)}[\psi_1, \psi_2]=0$ and $\mathscr{N}_{first}^{(T, 0)}[\psi_1,\psi_2]=0$, in the remaining terms we still have the presence of terms such as $O(ar^{-5})  |\psi_1||\psi_2|$, which need to be bounded on the right hand side by $\Mor^{ax}[\psi_1, \psi_2]$.
\end{proof}

Observe that because of the coupling terms in the model system, the energy estimates are still conditional with respect to the axially symmetric Morawetz bulk, in contrast with the case of Reissner-Nordstr\"om \cite{Giorgi7}. This is due for instance to the coupling term $\mathscr{N}_{coupl}^{(T, 0)}[\psi_1,\psi_2]$ and the curvature term $\mathscr{R}^{(T)}[\psi_1, \psi_2]$ and the fact that the commutators between $T$ and angular derivative pick up terms which are $O(|a|)$.

\subsection{Morawetz estimates}\label{sec:morawetz}

Consider the vectorfield $Y=\FF(r) \partial_r$ for a well-chosen function $\FF$, together with a scalar function $w_Y$. To obtain Morawetz estimates for the model system, we apply Proposition \ref{prop:general-computation-divergence-P} with the above and obtain
 \bea\label{eq:divergence-theorem-identity-Y}
\begin{split}
\D^\mu \PP_\mu^{( Y, w_Y)}[\psi_1, \psi_2]&= \EE^{(Y, w_Y)}[\psi_1, \psi_2] +\mathscr{N}_{first}^{( Y, w_Y)}[\psi_1,\psi_2]+\mathscr{N}_{coupl}^{( Y, w_Y)}[\psi_1,\psi_2]\\
&+\mathscr{N}_{lot}^{(Y, w_Y)}[\psi_1,\psi_2]+\mathscr{R}^{(Y)}[\psi_1, \psi_2].
\end{split}
\eea
In Section 6.1 of \cite{Giorgi9}, we computed the above terms for 
\bea\label{eq:FF-wred-w}
\FF=z u , \qquad w_Y = z \pr_r u, 
\eea
where $z$ and $u$ are any two functions, giving 
\bea\label{eq:EE-Y-wY-J}
\begin{split}
|q|^2\EE^{(Y, w_Y)}[\psi_1, \psi_2]&=\AA \big(  |\nab_r\psi_1|^2 + 8Q^2|\nab_r\psi_2|^2 \big) + \UU^{\a\b} \Re\big( \Db_\a \psi_1 \c\Db_\b \ov{\psi_1} +8Q^2 \Db_\a \psi_2 \c\Db_\b \ov{\psi_2} \big)\\
&+\big( \VV_1 |\psi_1|^2+ 8Q^2 \VV_2 |\psi_2|^2 \big),
\end{split}
\eea
with $\AA$, $\UU^{\a\b}$, $\VV_1$, $\VV_2$ given by  
   \bea\label{eq:expressions-AA-UU-VV-general-hf}\lab{eq:coeeficientsUUAAVV-u}
\begin{split}
 \AA&=z^{1/2}\Delta^{3/2} \partial_r\left( \frac{ z^{1/2}  u}{\Delta^{1/2}}  \right),   \\
  \UU^{\a\b}&= -  \frac{ 1}{2}  u \pr_r\left( \frac z \De\RR^{\a\b}\right),\\
\VV_i&= -  \frac 1 4  \pr_r\Big(\De \pr_r \big(
 z \pr_r u  \big)  \Big)-\frac 1 2  u  \pr_r \left(z |q|^2 V_i\right)=: V_0 + V_{pot, i} ,
 \end{split}
\eea
where $V_0=- \frac 1 4  \pr_r\Big(\De \pr_r \big(
 z \pr_r u  \big)  \Big)$ and $V_{pot, i}=-\frac 1 2  u \pr_r \left(z |q|^2 V_i\right)$ for $i=1,2$.

 Choosing, as\footnote{The choice of $u$ will instead differ from \cite{Giorgi9}, allowing to obtain positivity in the range $0<|Q|<M$.} in \cite{Giorgi9}, 
\bea\label{eq:choice-z}
z=\frac{\De}{(r^2+a^2)^2},
\eea
the above coefficients are given by
   \bea\label{eq:expressions-AA-UU-VV-general-hf-withz}
\begin{split}
 \AA&=\frac{\Delta^{2}}{r^2+a^2} \partial_r\left( \frac{   u}{r^2+a^2}  \right),   \\
  \UU^{\a\b}&= -  \frac{ 1}{2}  u \pr_r\left(  \frac{1}{(r^2+a^2)^2}\RR^{\a\b}\right),\\
  &= -  \frac{ 1}{2}  u \pr_r\left(-  \frac{2a}{(r^2+a^2)}\partial_t^{(\a} \partial_\phi^{\b)}- \frac{a^2}{(r^2+a^2)^2}  \partial_\phi^\a \partial_\phi^\b+ \frac{\Delta}{(r^2+a^2)^2}  O^{\a\b}\right),\\
    &= -  \frac{ 1}{2}  u \left(   \frac{4ar}{(r^2+a^2)^2}  \pr_t^{(\a} \pr_\phi^{\b)} +\frac{4a^2r}{(r^2+a^2)^3} \pr_\phi^\a \pr_\phi^\b-\frac{2\TT}{(r^2+a^2)^3}  O^{\a\b}\right),\\
\VV_i&= -  \frac 1 4  \pr_r\Big(\De \pr_r 
 w_Y   \Big)-\frac 1 2  u  \pr_r \left(\frac{\De}{(r^2+a^2)^2} |q|^2 V_i\right),
 \end{split}
\eea
where we used that 
\beaa
\RR^{\a\b}&=&  -(r^2+a^2)^2 \partial_t^\a \partial_t^\b-2a(r^2+a^2)\partial_t^{(\a} \partial_\phi^{\b)}-a^2  \partial_\phi^\a \partial_\phi^\b+ \Delta O^{\a\b}, \label{definition-RR-tensor}\\
 O^{\a\b}&=& \partial_\th^\a  \partial_\th^\b  +\frac{1}{\sin^2\th} \partial_{\vphi}^\a \partial_{\vphi}^\b+2a\partial_t^{(\a} \partial_\vphi^{\b)}+a^2 \sin^2\th \partial_t^\a \partial_t^\b, \nn
\eeaa 
and $\pr_rz=-\frac{2\TT}{(r^2+a^2)^3}$. We denote $ O^{\a\b} \nab_\a\psi\c\nab_\b\psi =: |q|^2 |\nab\psi|^2$.

Observe that in axial symmetry, the term $\UU^{\a\b}$ reduces to 
   \bea\label{eq:expression-U-ab}
\begin{split}
  \UU^{\a\b}
    &=   \frac{u \TT}{(r^2+a^2)^3}  O^{\a\b}
 \end{split}
\eea

Also, from \cite{Giorgi9}, we have, with the choice \eqref{eq:choice-z},
\bea\label{eq:mathscr-NN-coupl-X}
\begin{split}
\mathscr{N}_{coupl}^{(Y, w_Y)}[\psi_1,\psi_2]  &=4Q^2 \Re\Big[ u \big( \frac{zq^3(3q-2r)}{|q|^7}-\frac{(\pr_rz) q^3}{|q|^5} \big)  \psi_1 \c \left(  \DD \c  \ov{\psi_2}  \right)\Big] +O(a^2r^{-6})\FF \Re( \psi_1 \c \ov{\psi_2})\\
  &-\D_\mu \Re\Big[\frac{4Q^2 q^3}{|q|^5} \big( \psi_1 \c \big(\nabla_Y\ov{\psi_2} +\frac 1 2   w_Y \ov{\psi_2} \big)\big)^\mu -\frac{4Q^2z q^3}{|q|^5}u\psi_1 \c  (\DD \c\ov{\psi_2}) \pr_r^\mu \Big]\\
  &=\frac{8Q^2}{|q|^5(r^2+a^2)^2} \Re\Big[ uq^3 \big( \frac{\TT}{r^2+a^2}+\frac{\De(r+3ia\cos\th)}{2|q|^2} \big)  \psi_1 \c \left(  \DD \c  \ov{\psi_2}  \right)\Big] +O(a^2r^{-6})\FF \Re( \psi_1 \c \ov{\psi_2})\\
  &-\D_\mu \Re\Big[\frac{4Q^2 q^3}{|q|^5} \big( \psi_1 \c \big(\nabla_Y\ov{\psi_2} +\frac 1 2   w_Y \ov{\psi_2} \big)\big)^\mu -\frac{4Q^2z q^3}{|q|^5}u\psi_1 \c  (\DD \c\ov{\psi_2}) \pr_r^\mu \Big].
  \end{split}
\eea
and
    \bea\label{eq:mathscr-N-first0general}
    \begin{split}
\mathscr{N}_{first}^{(Y, w_Y)}[\psi_1,\psi_2]&=  \frac{a\cos\th}{|q|^2} \Big[  (\pr_r z) u \Im\Big(\ov{\psi_1}\c \nab_T \psi_1+16 Q^2\ov{\psi_2}\c \nab_T \psi_2 \Big) +2zu  \rhod\frac{|q|^2}{\De}(|\psi_1|^2+16Q^2 |\psi_2|^2\big)\Big]\\
   &-\D_\mu \Im \Big[  \frac{a\cos\th}{|q|^2} (\partial_r)^\mu   zu (\ov{\psi_1}\c\nab_T \psi_1+16Q^2\ov{\psi_2}\c\nab_T \psi_2 )\Big]  \\
   &+\partial_t \Im \Big(  \frac{a\cos\th}{|q|^2} zu (\ov{\psi_1}\c  \nab_r \psi_1+16Q^2\ov{\psi_2}\c  \nab_r \psi_2)\Big), 
   \end{split}
\eea
 \bea\label{eq:mathsc-R-Y}
 \begin{split}
 \mathscr{R}^{(Y)}[\psi_1, \psi_2]&= \left(\big(\rhod +\etab\wedge\eta\big)\frac{r^2+a^2}{\De}+\frac{2a^3r\cos\th(\sin\th)^2}{|q|^6}\right)\FF \Re\big( i \nab_{\That}\psi_1\c\ov{\psi_1}+i8Q^2 \nab_{\That}\psi_2\c\ov{\psi_2}\big).
\end{split}
 \eea

We rewrite \eqref{eq:divergence-theorem-identity-Y} as 
 \bea
\begin{split}
\D^\mu \PP_\mu^{( Y, w_Y)}[\psi_1, \psi_2]&=\mbox{Main}+\mbox{O(a) terms},
\end{split}
\eea
where
 \bea
\begin{split}
\mbox{Main}&:= \EE^{(Y, w_Y)}[\psi_1, \psi_2]+\mathscr{N}_{coupl}^{( Y, w_Y)}[\psi_1,\psi_2]\\
\mbox{O(a) terms}&:=\mathscr{N}_{first}^{( Y, w_Y)}[\psi_1,\psi_2]+\mathscr{N}_{lot}^{(Y, w_Y)}[\psi_1,\psi_2]+\mathscr{R}^{(Y)}[\psi_1, \psi_2].\label{eq:definition-Oaterms}
\end{split}
\eea
We now focus on $\mbox{Main}$. By using the expressions in \eqref{eq:expressions-AA-UU-VV-general-hf-withz}, \eqref{eq:expression-U-ab} and \eqref{eq:mathscr-NN-coupl-X}, we regroup the terms as:
 \beaa
\mbox{Main}&=&\frac{\Delta^{2}}{|q|^2(r^2+a^2)} \partial_r\left( \frac{   u}{r^2+a^2}  \right) \big(  |\nab_r\psi_1|^2 + 8Q^2|\nab_r\psi_2|^2 \big)+\mbox{I} +\mbox{II}\\
  &&-\D_\mu \Re\Big[\frac{4Q^2 q^3}{|q|^5} \big( \psi_1 \c \big(\nabla_Y\ov{\psi_2} +\frac 1 2   w_Y \ov{\psi_2} \big)\big)^\mu -\frac{4Q^2z q^3}{|q|^5}u\psi_1 \c  (\DD \c\ov{\psi_2}) \pr_r^\mu \Big]+O(a^2r^{-6})\FF \Re( \psi_1 \c \ov{\psi_2})
\eeaa
where
\beaa
\mbox{I}&:=&\frac{u \TT}{(r^2+a^2)^3} \big( |\nab \psi_1|^2 +8Q^2 |\nab \psi_2 |^2 \big)+\frac{8Q^2}{|q|^5(r^2+a^2)^2} \Re\Big[ uq^3 \big( \frac{\TT}{r^2+a^2}+\frac{\De(r+3ia\cos\th)}{2|q|^2} \big)  \psi_1 \c \left(  \DD \c  \ov{\psi_2}  \right)\Big] , \\
\mbox{II}&:=&\frac{1}{|q|^2}\big( \VV_1 |\psi_1|^2+ 8Q^2 \VV_2 |\psi_2|^2 \big).
\eeaa

We have the following general bound for terms $\mbox{I}+\mbox{II}$.

\begin{lemma} For any $p_1,p_2\in(0,1]$ and $\delta\in (0,1)$ we have
    \bea \label{eq:expression-I-II}   \mbox{I}+\mbox{II}&\geq_S&\delta \frac{u \TT}{(r^2+a^2)^3} \big( |\nab \psi_1|^2 +8Q^2 |\nab \psi_2 |^2 \big)+A_1 |\psi_1|^2+8Q^2 A_2|\psi_2|^2\\
&&-O(a)\frac{u \TT}{(r^2+a^2)^3}\big(|\nab\psi_1|^2+r^{-2}|\nab_T\psi_1|^2+r^{-2}|\psi_1|^2+|\nab\psi_2|^2+r^{-2}|\nab_T\psi_2|^2+r^{-2}|\psi_2|^2\big)\nonumber\\
&&-O(a) u r^{-6}\Big( \Re[ \psi_1 \c \left(  \DD \c  \ov{\psi_2}  \right)]+\Im[ \psi_1 \c \left(  \DD \c  \ov{\psi_2}  \right)]+r^{-1}|\psi_1|^2+r^{-1}|\psi_2|^2\Big)\nonumber
    \eea
 where 
 \begin{align}
     A_1&:= (1-\delta) \frac{u \TT}{(r^2+a^2)^3}  \frac{1}{r^2}  - \frac{8p_1^2Q^2r^6}{(1-\delta)p_2|q|^{10}(r^2+a^2)}\frac{u}{ \TT}\Big( \frac{\TT}{r^2+a^2}+\frac{\De r}{2|q|^2} \Big)^2 \nonumber\\
     &+\frac{1}{|q|^2}\Big( -  \frac 1 4  \pr_r\big(\De \pr_r 
 w_Y   \big)-\frac 1 2  u  \pr_r \big(\frac{\De}{(r^2+a^2)^2} |q|^2 V_1\big) \Big) \label{eq:definition-A1}, \\
     A_2&:=(1-\delta) \frac{2(1-p_2)u \TT}{(r^2+a^2)^3}  \frac{1}{r^2}-\frac{4(1-p_1)^2Q^2r^6}{(1-\delta)|q|^{10}(r^2+a^2)}\frac{u}{ \TT} \Big( \frac{\TT}{r^2+a^2}+\frac{\De r}{2|q|^2} \Big)^2 \nonumber\\
     &+\frac{1}{|q|^2}\Big( -  \frac 1 4  \pr_r\big(\De \pr_r 
 w_Y   \big)-\frac 1 2  u  \pr_r \big(\frac{\De}{(r^2+a^2)^2} |q|^2 V_2\big) \Big). \label{eq:definition-A2}
 \end{align}
\end{lemma}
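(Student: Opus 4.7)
The plan is to absorb the coupling term by splitting both the leading gradient energies and the coupling itself, tuning the splits with the parameters $p_1,p_2,\delta$ so that each piece lands on a matched ``reservoir''. First, I would keep a $\delta$-fraction of $\frac{u\TT}{(r^2+a^2)^3}(|\nab\psi_1|^2 + 8Q^2|\nab\psi_2|^2)$ as the leading Morawetz term in the claimed bound. The remaining $(1-\delta)$-fraction would then be converted into three reservoirs: applying the \emph{identity} form of \eqref{eq:elliptic-estimates-psi1} to $(1-\delta)|\nab\psi_1|^2$ produces simultaneously both a Poincar\'e-type contribution $(1-\delta)\Kh|\psi_1|^2\approx(1-\delta)|\psi_1|^2/r^2$ (which is exactly the first positive piece of $A_1$) and a reservoir of $\tfrac{1-\delta}{2}|\DD\hot\psi_1|^2$; the $(1-\delta)$-portion of $|\nab\psi_2|^2$ I would split by $p_2$, converting the $p_2$-piece via \eqref{eq:elliptic-nablapsi2-divpsi2} into a $|\DD\c\psi_2|^2$-reservoir and the $(1-p_2)$-piece via Lemma~\ref{lemma:poincareinequalityfornabonSasoidfh:chap6} into the first positive piece of $A_2$. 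The crucial additional flexibility is that, by Lemma~\ref{lemma:adjoint-operators}, the coupling $\mathscr{N}_{coupl}^{(Y,w_Y)}$ admits the two pointwise equivalent forms \eqref{eq:N-coupl-1} and \eqref{eq:N-coupl-2} (equal modulo $O(a)$ and a spacetime divergence), one involving $\psi_1\c(\DD\c\ov{\psi_2})$ and the other $(\DD\hot\psi_1)\c\ov{\psi_2}$. I would therefore split the coupling as $p_1\cdot\eqref{eq:N-coupl-1}+(1-p_1)\cdot\eqref{eq:N-coupl-2}$ and bind each fraction to a different reservoir.

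Concretely, write the leading-order coupling coefficient as $C'_0=\frac{8Q^2 u q^3}{|q|^5(r^2+a^2)^2}\bigl(\frac{\TT}{r^2+a^2}+\frac{\De r}{2|q|^2}\bigr)$, so that $|C'_0|^2=\frac{64Q^4 u^2 r^6}{|q|^{10}(r^2+a^2)^4}\bigl(\frac{\TT}{r^2+a^2}+\frac{\De r}{2|q|^2}\bigr)^2+O(a)$. For the $p_1$-fraction of the coupling I would use Cauchy--Schwarz $p_1|C'_0||\psi_1||\DD\c\psi_2|\leq t_1|\psi_1|^2+\frac{p_1^2|C'_0|^2}{4t_1}|\DD\c\psi_2|^2$ with $t_1$ tuned so that the second term exactly fills the $|\DD\c\psi_2|^2$-reservoir $2(1-\delta)p_2 Q^2\frac{u\TT}{(r^2+a^2)^3}$; this forces $t_1=\frac{8p_1^2 Q^2 u r^6}{(1-\delta)p_2|q|^{10}(r^2+a^2)\TT}(\cdots)^2$, precisely the negative piece of $A_1$. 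For the $(1-p_1)$-fraction, rewritten as $(1-p_1)\Re[C'_0(\DD\hot\psi_1)\c\ov{\psi_2}]$ via \eqref{eq:N-coupl-2}, Cauchy--Schwarz gives $(1-p_1)|C'_0||\DD\hot\psi_1||\psi_2|\leq t_2|\DD\hot\psi_1|^2+\frac{(1-p_1)^2|C'_0|^2}{4t_2}|\psi_2|^2$; tuning $t_2=\frac{(1-\delta)u\TT}{2(r^2+a^2)^3}$ to exactly fill the $|\DD\hot\psi_1|^2$-reservoir yields the $|\psi_2|^2$-coefficient $\frac{(1-p_1)^2|C'_0|^2(r^2+a^2)^3}{2(1-\delta)u\TT}$, which after division by $8Q^2$ matches exactly the negative term $-\frac{4(1-p_1)^2 Q^2 r^6 u}{(1-\delta)|q|^{10}(r^2+a^2)\TT}(\cdots)^2$ of $A_2$. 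The potential pieces $\frac{1}{|q|^2}\VV_i|\psi_i|^2$ feed directly into $A_i$ via the decomposition $\VV_i=V_0+V_{pot,i}$ from \eqref{eq:coeeficientsUUAAVV-u}.

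The hard part will be the systematic bookkeeping of all the $O(a)$ error terms, so that they all land cleanly in the last two error lines of the claimed inequality. The principal sources I will need to control are: the imaginary piece $\frac{3ia\cos\th\,\De}{2|q|^2}$ and the $O(a^2)$ corrections in $q^3$ inside $C'_0$; the commutators $[\DD\c,\nab_Y]$ and $[\DD\hot,\nab_Y]$ together with the Ricci corrections involving $(H-\Hb)\hot\psi_1$ and $\nab_Y(H-\Hb)$ in \eqref{eq:N-coupl-1}--\eqref{eq:N-coupl-2}; the $O(a)$ corrections to $\Kh\approx 1/r^2$ on Kerr--Newman spheres and to the Poincar\'e constants in Lemma~\ref{lemma:poincareinequalityfornabonSasoidfh:chap6}; and the factor $H+\Hb=O(a)$ appearing in Lemma~\ref{lemma:adjoint-operators} when passing between the two forms of the coupling. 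All of these must be dominated, via Young's inequality, by the schematic terms $O(a)\frac{u\TT}{(r^2+a^2)^3}(|\nab\psi_i|^2+r^{-2}|\nab_T\psi_i|^2+r^{-2}|\psi_i|^2)$ and $O(a)\,u\,r^{-6}(|\Re[\psi_1\c(\DD\c\ov{\psi_2})]|+|\Im[\psi_1\c(\DD\c\ov{\psi_2})]|+r^{-1}|\psi_i|^2)$ on the right-hand side of the stated bound.
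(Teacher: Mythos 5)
Your proposal is correct and follows essentially the same route as the paper's proof: keep a $\delta$-fraction of the angular energies, convert the $(1-\delta)$-fraction via the elliptic identity \eqref{eq:elliptic-estimates-psi1}, \eqref{eq:elliptic-nablapsi2-divpsi2} and Lemma \ref{lemma:poincareinequalityfornabonSasoidfh:chap6} into Poincar\'e pieces together with $|\DD\hot\psi_1|^2$ and $|\ov{\DD}\c\psi_2|^2$ reservoirs (splitting the latter with $p_2$), use Lemma \ref{lemma:adjoint-operators} to express a $p_1$-fraction of the coupling as $\psi_1\c(\DD\c\ov{\psi_2})$ and the complementary $(1-p_1)$-fraction as $(\DD\hot\psi_1)\c\ov{\psi_2}$, and close by completing the square against each reservoir, which is exactly the inequality $ax^2-bxy\geq-\frac{b^2}{4a}y^2$ the paper uses twice; your tuned constants $t_1,t_2$ reproduce precisely the negative pieces of $A_1$ and $A_2$, and you correctly catalogue the $O(a)$ error sources. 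The only stylistic difference is that you frame the $p_1$-split as a convex combination of the two divergence-theorem forms \eqref{eq:N-coupl-1}, \eqref{eq:N-coupl-2}, whereas the paper applies the adjoint identity pointwise inside the bulk term $\mbox{I}$ after already fixing the \eqref{eq:N-coupl-1} form; the effect is the same.
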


\begin{proof}
Using \eqref{eq:elliptic-estimates-psi1}, \eqref{eq:elliptic-nablapsi2-divpsi2} and Lemma \ref{lemma:poincareinequalityfornabonSasoidfh:chap6}, i.e.
\beaa
 |\nab \psi_1  |^2&\geq_S&\frac{1}{r^2} |\psi_1 |^2+\frac 1 2 |\DD \hot  \psi_1  |^2 -O(a)\big(|\nab\psi_1|^2+r^{-2}|\nab_T\psi_1|^2+r^{-2}|\psi_1|^2\big), \\
|\nab\psi_2|^2 &\geq_S& \frac{2}{r^2}|\psi_2|^2  -O(a)\big(|\nab\psi_2|^2+r^{-2}|\nab_T\psi_2|^2+r^{-4}|\psi_2|^2\big), \\
          |\nab \psi_2  |^2            &\geq_S& \frac 1 4    |\ov{\DD}\c  \psi_2  |^2 -O(a)\big(|\nab\psi_2|^2+r^{-2}|\nab_T\psi_2|^2+r^{-2}|\psi_2|^2\big),
      \eeaa
we write for any $p_2 \in [0,1]$
\begin{align*}
|\nab\psi_2|^2 \geq_S 
\frac{2(1-p_2)}{r^2}|\psi_2|^2 +\frac{p_2}{4}|\ov \DD\cdot \psi_2|^2 -O(a)\big(|\nab\psi_2|^2+r^{-2}|\nab_T\psi_2|^2+r^{-4}|\psi_2|^2\big).
\end{align*}

Also, using Lemma \ref{lemma:adjoint-operators} we deduce
    \bea
 ( \DD \hot   \psi_1) \c   \ov{\psi_2}  &=_S&  -\psi_1 \c (\DD \c \ov{\psi_2}) -O(a)r^{-2}\big(|\psi_1|^2+|\psi_2|^2 \big)  +\D_\a (\psi_1 \c \ov{\psi_2})^\a.
 \eea
 and so for any $p_1 \in [0,1]$ we can write
 \begin{align*}
     \Re\Big[ \psi_1 \c \left(  \DD \c  \ov{\psi_2}  \right)\Big]&=p_1\Re\Big[ \psi_1 \c \left(  \DD \c  \ov{\psi_2}  \right)\Big]-(1-p_1)\Re\Big[ ( \DD \hot   \psi_1) \c   \ov{\psi_2}\Big]-O(a)r^{-2}\big(|\psi_1|^2+|\psi_2|^2 \big) . 
 \end{align*}
Therefore using the above we obtain for $q\neq 0$ and $\delta>0$
\beaa
\mbox{I}&=&\frac{u \TT}{(r^2+a^2)^3} \big( |\nab \psi_1|^2 +8Q^2 |\nab \psi_2 |^2 \big)+\frac{8Q^2}{|q|^5(r^2+a^2)^2} \Re\Big[ uq^3 \big( \frac{\TT}{r^2+a^2}+\frac{\De(r+3ia\cos\th)}{2|q|^2} \big)  \psi_1 \c \left(  \DD \c  \ov{\psi_2}  \right)\Big] \\
&\geq_S&\delta \frac{u \TT}{(r^2+a^2)^3} \big( |\nab \psi_1|^2 +8Q^2 |\nab \psi_2 |^2 \big)\\
&&(1-\delta)\frac{u \TT}{(r^2+a^2)^3} \big( \frac{1}{r^2} |\psi_1 |^2+\frac 1 2 |\DD \hot  \psi_1  |^2  \big)+8Q^2(1-\delta) \frac{u \TT}{(r^2+a^2)^3} \big( \frac{2(1-p_2)}{r^2}|\psi_2|^2 +\frac{p_2}{4}|\ov \DD\cdot \psi_2|^2 \big)\\
&&+\frac{8Q^2ur^3}{|q|^5(r^2+a^2)^2}  \big( \frac{\TT}{r^2+a^2}+\frac{\De r}{2|q|^2} \big) \Big(p_1\Re\Big[ \psi_1 \c \left(  \DD \c  \ov{\psi_2}  \right)\Big]-(1-p_1)\Re\Big[ ( \DD \hot   \psi_1) \c   \ov{\psi_2}\Big] \Big)\\
&&+\frac{u \TT}{(r^2+a^2)^3}\Big(-O(a)\big(|\nab\psi_1|^2+r^{-2}|\nab_T\psi_1|^2+r^{-2}|\psi_1|^2+|\nab\psi_2|^2+r^{-2}|\nab_T\psi_2|^2+r^{-2}|\psi_2|^2\big)\Big)\\
&&-u O(aQ^2)r^{-6}\Big( \Re[ \psi_1 \c \left(  \DD \c  \ov{\psi_2}  \right)]+\Im[ \psi_1 \c \left(  \DD \c  \ov{\psi_2}  \right)]+r^{-1}|\psi_1|^2+r^{-1}|\psi_2|^2\Big)\\
&=&\delta \frac{u \TT}{(r^2+a^2)^3} \big( |\nab \psi_1|^2 +8Q^2 |\nab \psi_2 |^2 \big)\\
&&+\frac{1-\delta}{2}\frac{u \TT}{(r^2+a^2)^3}   |\DD \hot  \psi_1  |^2 -\frac{8(1-p_1)Q^2ur^3}{|q|^5(r^2+a^2)^2}  \big( \frac{\TT}{r^2+a^2}+\frac{\De r}{2|q|^2} \big) \Re\Big[ ( \DD \hot   \psi_1) \c   \ov{\psi_2}\Big] \\
&&+ \frac{2(1-\delta)p_2Q^2 u \TT}{(r^2+a^2)^3}|\ov \DD\cdot \psi_2|^2 +\frac{8p_1Q^2ur^3}{|q|^5(r^2+a^2)^2}  \big( \frac{\TT}{r^2+a^2}+\frac{\De r}{2|q|^2} \big) \Re\Big[ \psi_1 \c \left(  \DD \c  \ov{\psi_2}  \right)\Big]\\
&&+(1-\delta)\frac{u \TT}{(r^2+a^2)^3}  \frac{1}{r^2} |\psi_1 |^2  + (1-\delta)\frac{16(1-p_2)Q^2u \TT}{(r^2+a^2)^3}  \frac{1}{r^2}|\psi_2|^2\\
&&+\frac{u \TT}{(r^2+a^2)^3}\Big(-O(a)\big(|\nab\psi_1|^2+r^{-2}|\nab_T\psi_1|^2+r^{-2}|\psi_1|^2+|\nab\psi_2|^2+r^{-2}|\nab_T\psi_2|^2+r^{-2}|\psi_2|^2\big)\Big)\\
&&-u O(aQ^2)r^{-6}\Big( \Re[ \psi_1 \c \left(  \DD \c  \ov{\psi_2}  \right)]+\Im[ \psi_1 \c \left(  \DD \c  \ov{\psi_2}  \right)]+r^{-1}|\psi_1|^2+r^{-1}|\psi_2|^2\Big)\\
&\geq&\delta \frac{u \TT}{(r^2+a^2)^3} \big( |\nab \psi_1|^2 +8Q^2 |\nab \psi_2 |^2 \big)\\
&&+(1-\delta)\frac{u \TT}{(r^2+a^2)^3}  \frac{1}{r^2} |\psi_1 |^2 - \frac{8p_1^2Q^2r^6}{p_2(1-\delta)|q|^{10}(r^2+a^2)}\frac{u}{ \TT}\Big( \frac{\TT}{r^2+a^2}+\frac{\De r}{2|q|^2} \Big)^2 |\psi_1|^2 \\
&&+ (1-\delta)\frac{16(1-p_2)Q^2u \TT}{(r^2+a^2)^3}  \frac{1}{r^2}|\psi_2|^2-\frac{32(1-p_1)^2Q^4r^6}{(1-\delta)|q|^{10}(r^2+a^2)}\frac{u}{ \TT} \Big( \frac{\TT}{r^2+a^2}+\frac{\De r}{2|q|^2} \Big)^2 |\psi_2|^2\\
&&+\frac{u \TT}{(r^2+a^2)^3}\Big(-O(a)\big(|\nab\psi_1|^2+r^{-2}|\nab_T\psi_1|^2+r^{-2}|\psi_1|^2+|\nab\psi_2|^2+r^{-2}|\nab_T\psi_2|^2+r^{-2}|\psi_2|^2\big)\Big)\\
&&-u O(aQ^2)r^{-6}\Big( \Re[ \psi_1 \c \left(  \DD \c  \ov{\psi_2}  \right)]+\Im[ \psi_1 \c \left(  \DD \c  \ov{\psi_2}  \right)]+r^{-1}|\psi_1|^2+r^{-1}|\psi_2|^2\Big)
\eeaa
where we used twice the inequality $ax^2-bxy\geq -\frac{b^2}{4a}y^2.$ By combining $\mbox{I}+\mbox{II}$
 we obtain the stated expressions.
\end{proof}

\subsubsection{Choice of functions}

We now define the function  $u$ as
\begin{eqnarray}\label{eq:definition-u}
    u=\Big(1-\frac{Q^2}{M^2}\Big) (r^2-r_P^2)+\frac{Q^2}{M^2}\frac{r^3-r_P^3}{r},
\end{eqnarray}  
where recall that $r_P$ is the largest root of the trapping polynomial $\TT= r^3-3Mr^2 + ( a^2+2Q^2)r+Ma^2$.
As a consequence of \eqref{eq:FF-wred-w}, the function $w_Y$ is given by 
\begin{eqnarray}\label{eq:definition-w}
   w_Y=\frac{\De}{(r^2+a^2)^2}\pr_r u= \frac{\De}{(r^2+a^2)^2}\left( 2r+\frac{Q^2}{M^2}\frac{r_P^3}{r^2}\right).
\end{eqnarray}

We have the following.

\begin{proposition}
With the choice of $u$ in \eqref{eq:definition-u}, for $|Q|<M$ and for $\delta$ and $\frac{|a|}{M}$ sufficiently small such that $Q^2+a^2<M^2$, we have
\begin{itemize}
    \item $\partial_r\left( \frac{   u}{r^2+a^2}  \right)>0$,
    \item if $p_1=\frac 2 5$ and $p_2=\frac 1 2$, then $A_1, A_2>0$ for $r\geq r_+$.
\end{itemize}
\end{proposition}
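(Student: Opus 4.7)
The plan is to reduce both assertions to the Reissner-Nordstr\"om limit $a=0$ by continuity, since each quantity appearing in $A_1, A_2$ depends smoothly on $a$ (including the trapping radius $r_P(a, Q, M)$, which is a simple root of $\TT$ in the strictly sub-extremal regime), and strict positivity at $a=0$ on $[r_+, \infty)$ persists for $|a|/M$ sufficiently small.

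For the first bullet, at $a=0$ one has
\[
\frac{u}{r^2}=\Big(1-\frac{Q^2}{M^2}\Big)\Big(1-\frac{r_P^2}{r^2}\Big)+\frac{Q^2}{M^2}\Big(1-\frac{r_P^3}{r^3}\Big),
\]
so
\[
\partial_r\Big(\frac{u}{r^2}\Big)=\Big(1-\frac{Q^2}{M^2}\Big)\frac{2r_P^2}{r^3}+\frac{Q^2}{M^2}\frac{3r_P^3}{r^4}>0
\]
on $(0, \infty)$ for $|Q|\leq M$, and the claim extends to $|a|/M$ small by continuity.

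For the second bullet, I would decompose $A_i=G_i+C_i+P_i$, where $G_i$ is the non-negative gradient piece proportional to $u\TT/(r^2+a^2)^3 r^{-2}$, $C_i$ is the negative cross-coupling piece proportional to $-(u/\TT)(\TT/(r^2+a^2)+\De r/(2|q|^2))^2$, and $P_i=|q|^{-2}(V_0+V_{pot, i})$ collects the potential contributions coming from $w_Y$ and the Regge-Wheeler potentials $V_i$ in \eqref{eq:potentials-model}. A key observation is that both $\TT$ and $u$ vanish simply at $r=r_P$ (the latter because $u(r_P)=(1-Q^2/M^2)(r_P^2-r_P^2)+(Q^2/M^2)(r_P^3-r_P^3)/r_P=0$), so the ratio $u/\TT$ is smooth, $C_i$ stays bounded at trapping, and $V_{pot, i}$ inherits the zero from $u$. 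Hence at $r=r_P$ it is $V_0=-\frac14\partial_r(\De\partial_r w_Y)$ alone (non-zero since $\partial_r u(r_P)\neq 0$) that must dominate $|q|^{-2}|C_i(r_P)|$; the interpolating ansatz \eqref{eq:definition-u} is engineered precisely so that this holds across the full range $|Q|<M$, with the Schwarzschild-type piece $(1-Q^2/M^2)(r^2-r_P^2)$ handling small $Q/M$ and the extremal-RN piece $(Q^2/M^2)(r^3-r_P^3)/r$ handling $Q/M$ close to one. The specific values $p_1=2/5$ and $p_2=1/2$ are selected to sharpen the completing-the-square estimate that produced $C_i$ from the elliptic identities \eqref{eq:elliptic-estimates-psi1}, \eqref{eq:elliptic-nablapsi2-divpsi2}.

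The main obstacle is the explicit algebraic verification at $a=0$. After substituting \eqref{eq:definition-u}, \eqref{eq:definition-w} and \eqref{eq:potentials-model}, and using the sub-extremal trapping identity $r_P^2=3Mr_P-2Q^2$ (valid at $a=0$) to eliminate $r_P^2$ in favor of $r_P$ and $Q$, each $A_i$ becomes an explicit rational function of $(r/M, Q/M)$ whose numerator must be shown to be positive on the region $\{r\geq r_+,\, 0<|Q|<M\}$. The tightest checks arise at the two distinguished radii $r=r_+$ (where $\De=0$, so $G_i$ and $C_i$ degenerate together) and $r=r_P$ (where only $V_0$ controls the sign); away from these, the gradient piece $G_i$ grows and eventually dominates. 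I would first perform the simplification symbolically, then either exhibit a sum-of-squares decomposition or verify non-negativity by inspecting roots of the numerator on the compact parameter region, possibly isolating a neighborhood of $|Q|=M$ where only the extremal-RN piece of $u$ is active. Once strict positivity at $a=0$ is established uniformly in $|Q|<M$ bounded away from extremality, choosing $\delta$ small (so that the factor $1/(1-\delta)$ in $C_i$ is close to one) and then $|a|/M$ small promotes it to the claimed range $a^2+Q^2<M^2$.
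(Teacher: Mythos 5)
Your overall strategy coincides with the paper's: reduce by continuity to $\delta=0$, $a=0$, handle the first bullet by the explicit derivative formula, and then verify positivity of $A_1,A_2$ as functions of $(r,Q)$ on $\{r\geq r_+,\,|Q|<M\}$. The first bullet is done correctly, and the qualitative discussion of the second bullet (both $u$ and $\TT$ vanish simply at $r_P$, so $u/\TT$ is regular; $V_{pot,i}$ inherits the zero from $u$; the interpolating ansatz balances the Schwarzschild and extremal RN regimes) is accurate and insightful. However, there are two issues.

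First, a minor inaccuracy: you claim that at $r=r_+$ ``$\De=0$, so $G_i$ and $C_i$ degenerate together.'' This is not right. The gradient piece $G_i\propto u\TT/((r^2+a^2)^3 r^2)$ does not contain a factor of $\De$, and in fact $u(r_+)<0$, $\TT(r_+)<0$ so $G_i(r_+)>0$ is nondegenerate; and the squared factor in $C_i$ reduces at $r_+$ to $\TT^2/(r^2+a^2)^2\neq0$, so $C_i(r_+)\neq0$ as well. Nothing special degenerates at $r_+$; the real degeneration is at $r=r_P$ (where $G_i=V_{pot,i}=0$ but $C_i,V_0\neq0$), which you do identify correctly.

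Second, and more substantially, the decisive step — establishing that the explicit rational functions $A_1,A_2$ (with $p_1=2/5$, $p_2=1/2$) are strictly positive for $r\geq r_+$ and $0<|Q|<M$ — is left as a plan (``perform the simplification symbolically, then either exhibit a sum-of-squares decomposition or verify non-negativity by inspecting roots''). This is exactly the content of the paper's Lemma \ref{lemma:positivity-RN}, and it is where the mathematics actually lives. The paper's proof of that lemma involves a non-obvious reduction: multiplying by $(1-\tilde x_p/x)$ (positive on the exterior since $\tilde r_P<r_+$) to clear the $r-\tilde r_P$ in the denominator coming from the $u/\TT$ factor in $C_i$, and then substituting $\tilde x_p=3-x_p$ and $\lambda^2=(3x_p-x_p^2)/2$ to convert $(1-\tilde x_p/x)r^3 A_i$ into genuine polynomials in $(x^{-1},x_p)$, whose positivity is then verified symbolically (with a computer-assisted check). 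Your suggestion of eliminating $r_P^2$ in favor of $(r/M,Q/M)$ via $r_P^2=3Mr_P-2Q^2$ is a reasonable alternative parametrization, but you would still need to actually execute the verification; as written, the proposal identifies the right approach without completing the proof. You should also flag explicitly that the domain in $r$ is non-compact and that compactification (e.g.\ working in $x^{-1}\in(0,1/x_+]$ as the paper effectively does) or a direct large-$r$ asymptotic argument is needed before any root-inspection argument can close.
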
 
\begin{proof}
    By continuity it suffices to show that the two conditions hold for $\delta=0$ and $a=0$.
The first condition for $a=0$ reduces to
    \begin{align*}
        \partial_r\big( \frac{   u}{r^2}  \big)&= \Big(1-\frac{Q^2}{M^2}\Big) \frac{2r_P^2}{r^3}+\frac{Q^2}{M^2}\frac{3r_P^3}{r^4},
    \end{align*}
    which is clearly positive for $|Q|<M$.

    The expressions for $A_1, A_2$ in \eqref{eq:definition-A1}-\eqref{eq:definition-A2} for $a=\delta=0$ reduce to
\begin{align*}
     A_1&= \frac{u}{r^5} \big( 1-\frac{3M}{r}+\frac{2Q^2}{r^2}  \big)   - \frac{8p_1^2Q^2}{p_2r^7}\frac{u}{  1-\frac{3M}{r}+\frac{2Q^2}{r^2}}\Big(1-\frac{3M}{r}+\frac{2Q^2}{r^2}+\frac 1 2  \Up  \Big)^2 \\
     &-  \frac 1 4 r^{-2} \pr_r\big(\De \pr_r w_Y \big)-\frac 1 2\frac{u}{r^2}  \pr_r \big(\Up  V_1\big), \\
     A_2&= \frac{2(1-p_2)u}{r^5} \big( 1-\frac{3M}{r}+\frac{2Q^2}{r^2}  \big) -\frac{4(1-p_1)^2Q^2}{r^7}\frac{u}{  1-\frac{3M}{r}+\frac{2Q^2}{r^2}}\Big(1-\frac{3M}{r}+\frac{2Q^2}{r^2}+\frac 1 2  \Up  \Big)^2 \\
     &-  \frac 1 4 r^{-2} \pr_r\big(\De \pr_r w_Y \big)-\frac 1 2 \frac{u}{r^2}  \pr_r \big(\Up  V_2\big),
 \end{align*}
 where $\Up=\frac{\De}{r^2}=1-\frac{2M}{r}+\frac{Q^2}{r^2}$.
 From Lemma \ref{lemma:positivity-RN}, we deduce that if $p_1=\frac 2 5$ and $p_2=\frac 1 2$ then $A_1$ and $A_2$ are positive for $|Q|<M$ and  $r\geq r_+$.
\end{proof}

 As a consequence of the above, for $|Q|<M$ and $\frac{|a|}{M}$ sufficiently small, there exists a sufficiently small universal constant $c_0$ such that 
  \beaa
\mbox{Main}&\geq& c_0 \Big[  r^{-3} \big(  |\nab_{\pr_{r_*}}\psi_1|^2 + |\nab_{\pr_{r_*}}\psi_2|^2+|\psi_1|^2+|\psi_2|^2 \big)+\big(1-\frac{r_P}{r} \big)^2  \Big(r^{-1} |\nab \psi_1|^2+ r^{-1} |\nab \psi_2|^2  \Big)\Big]\\
  &&-O(a)\frac 1 r \left(1-\frac{r_P}{r} \right)^2\big(|\nab\psi_1|^2+r^{-2}|\nab_T\psi_1|^2+|\nab\psi_2|^2+r^{-2}|\nab_T\psi_2|^2\big)-O(a)r^{-3} (|\psi_1|^2+|\psi_2|^2)\\
    &&-\D_\mu \Re\Big[\frac{4Q^2 q^3}{|q|^5} \big( \psi_1 \c \big(\nabla_Y\ov{\psi_2} +\frac 1 2   w_Y \ov{\psi_2} \big)\big)^\mu -\frac{4Q^2z q^3}{|q|^5}u\psi_1 \c  (\DD \c\ov{\psi_2}) \pr_r^\mu \Big]
\eeaa
where we distributed the trapping factor in the angular derivative in the terms $\Re[ \psi_1 \c \left(  \DD \c  \ov{\psi_2}  \right)]+\Im[ \psi_1 \c \left(  \DD \c  \ov{\psi_2}  \right)]$ through Cauchy-Schwarz. 

Upon combining the above with the standard use of a Lagrangian, given by $(Y, w_Y)+(0, \delta_Tw_T)$ for $w_T=- \frac{4 M \De \TT^2}{r^2 (r^2+a^2)^4}$ and some small $\delta_T>0$, we can add control of the trapped time derivative, so that the second line can be absorbed by the first one for $\frac{|a|}{M}$ sufficiently small, and can be bounded from below by $\Mor^{ax}[\psi_1, \psi_2](\tau_1, \tau_2)$.

On the other hand, the $\mbox{O(a) terms}$ in \eqref{eq:definition-Oaterms} satisfy
\beaa
|\mbox{O(a) terms}|&\les& O(a)\frac 1 r \left(1-\frac{r_P}{r} \right)^2\big(|\nab\psi_1|^2+r^{-2}|\nab_T\psi_1|^2+|\nab\psi_2|^2+r^{-2}|\nab_T\psi_2|^2\big)+O(a)r^{-3} (|\psi_1|^2+|\psi_2|^2)\\
&&+\D_\mu \Im \Big[  \frac{a\cos\th}{|q|^2} (\partial_r)^\mu   zu (\ov{\psi_1}\c\nab_T \psi_1+16Q^2\ov{\psi_2}\c\nab_T \psi_2 )\Big] \\
&&+\int_{\MM(0, \tau)} \Big( (|\nab_{\pr_{r_*}}\psi_1|+r^{-1}|\psi_1|)|N_1| + Q^2(|\nab_{\pr_{r_*}}\psi_2|+r^{-1}|\psi_2|)|N_2|\Big).
\eeaa

Applying the divergence theorem to $\D^\mu \PP_\mu^{( Y, w_Y+\delta_T w_T)}[\psi_1, \psi_2]$ we deduce 
\beaa
  \Mor^{ax}[\psi_1, \psi_2](0, \tau)  \les &&\int_{\pr\MM(0, \tau)}|M(\psi_1, \psi_2)|\\
&& +\int_{\MM(0, \tau)}\Big( | \nab_{\pr_{r_*}} \psi_1 | + r^{-1}|\psi_1| \Big)    |  N|+\Big( | \nab_{\pr_{r_*}} \psi_2 | + r^{-1}|\psi_2| \Big)    |  N_2|,
\eeaa
where $M(\psi_1,\psi_2)$
is a quadratic expression in $\psi_1$ and $\psi_2$ and their first derivatives which can easily be bounded by the supremum of the energy $\sup_{[0, \tau]}E[\psi_1,\psi_2](\tau)$.
Finally, by combining the above with the energy estimates in \eqref{eq:energy-estimates-conditional} multiplied by a large constant (so that the energies can absorb the boundary terms $\int_{\pr\MM(0, \tau)}|M(\psi_1, \psi_2)|$), we obtain for $\frac{|a|}{M}$ sufficiently small
\bea\label{eq:energy-morawetz-with-lot}
\begin{split}
E[\psi_1, \psi_2](\tau)+\Mor^{ax}[\psi_1, \psi_2](0, \tau)   &\les  E[\psi_1, \psi_2](0)+\mathcal{N}[\psi_1, \psi_2, N_1, N_2] (0, \tau),
 \end{split}
 \eea
 where
\bea\label{eq:definition-NN-psi1psi2N1N2}
\begin{split}
\mathcal{N}[\psi_1, \psi_2, N_1, N_2] (0, \tau)&:= \int_{\MM(0, \tau)}\Big( | \nab_{\pr_{r_*}} \psi_1 | + r^{-1}|\psi_1| \Big)    |  N_1|+\Big( | \nab_{\pr_{r_*}} \psi_2 | + r^{-1}|\psi_2| \Big)    |  N_2|\\
&+\int_{\MM(0, \tau)}\Big( |N_1|^2+Q^2 |N_2|^2\Big)\\
&+\left|\int_{\MM(0, \tau)}\Re\Big( \nabla_{T}\ov{\psi_1} \c N_1+8Q^2  \nabla_{T}\ov{\psi_2}  \c N_2\Big)\right|
\end{split}
\eea

\subsection{Control of the lower order terms for the gRW system}\label{sec:lot}

We are left to control the lower order terms $\mathcal{N}[\psi_1, \psi_2, N_1, N_2] (0, \tau)$ on the right hand side of \eqref{eq:energy-morawetz-with-lot}. This control was obtained in two proposition in \cite{Giorgi9} that we recall here. Notice that the two proposition did not have any restriction on the charge parameter.

\begin{proposition}[Proposition 3.5 in \cite{Giorgi9}]\label{lemma:crucial1} Let $\pf, \qf$ be solutions of the gRW system \eqref{final-eq-1}-\eqref{final-eq-2}. Then, the following holds true:
 \bea
\NN[\pf, \qf, N_1, N_2](0, \tau)
  \les |a|\big(  E[\pf, \qf, \Bfr, \Ffr, A, \Xfr] ( \tau)+B[\pf, \qf, \Bfr, \Ffr, A, \Xfr] (0, \tau)\big),
  \eea   
  where $N_1=  O(a^2 r^{-4}) \psi_1 + L_1$ and $N_2= O(a^2 r^{-4}) \psi_2 + L_2$, with $L_1$ and $L_2$ given by Theorem \ref{main-theorem-RW}.
\end{proposition}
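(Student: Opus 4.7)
The plan is to unpack the definition of $\NN[\pf,\qf,N_1,N_2](0,\tau)$ from \eqref{eq:definition-NN-psi1psi2N1N2} and estimate its three constituent pieces -- the mixed $|\nab_{\pr_{r_*}}\psi_i||N_i|$ and $r^{-1}|\psi_i||N_i|$ integrals, the quadratic $|N_i|^2$ integral, and the $|\int\nab_T\ov{\psi}_i\c N_i|$ term -- using the decomposition $N_1=O(a^2 r^{-4})\pf+L_1$ and $N_2=O(a^2 r^{-4})\qf+L_2$ and the observation that the lower-order terms $L_1,L_2$ of Theorem~\ref{main-theorem-RW} consist of Kerr--Newman background coefficients carrying an overall factor of $|a|$, reflecting the complete decoupling of the gRW system from the Teukolsky variables in the Reissner--Nordstr\"om limit $a=0$ recalled in the introduction. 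These coefficients are paired with at most one derivative of $\Bfr,\Ffr,A,\Xfr$ whose $r$-weights are compatible with those appearing in $B[\Bfr,\Ffr]$ and $B[A,\Xfr]$.

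For the $O(a^2 r^{-4})\psi_i$ contribution, Cauchy--Schwarz immediately gives
\[
\int_{\MM(0,\tau)}|\nab_{\pr_{r_*}}\pf|\,|a^2 r^{-4}\pf|\les a^2\int_{\MM(0,\tau)}\bigl(r^{-3}|\nab_{\pr_{r_*}}\pf|^2+r^{-5}|\pf|^2\bigr)\les a^2\,\Mor^{ax}[\pf,\qf](0,\tau),
\]
and similarly $\int|N_i|^2$ produces $a^4$ times the Morawetz bulk. Since $a^2\leq|a|$ in the regime $|a|\ll M$, these are absorbed into $|a|\,B[\pf,\qf,\Bfr,\Ffr,A,\Xfr](0,\tau)$. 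For the $L_i$ contribution, a weighted Cauchy--Schwarz
\[
\int |\nab_{\pr_{r_*}}\pf|\,|L_1|\les \epsilon\int r^{-3}|\nab_{\pr_{r_*}}\pf|^2+\epsilon^{-1}\int r^3|L_1|^2,
\]
combined with $|L_i|\les|a|\cdot(\text{weighted first derivative of }\Bfr,\Ffr,A,\Xfr)$ and the choice $\epsilon=|a|$, reduces matters to verifying $\int r^3|L_1|^2\les a^2\,B[\Bfr,\Ffr,A,\Xfr](0,\tau)$, which follows because the $r$-weights of the coefficients multiplying each first derivative of a Teukolsky variable inside $L_i$ exactly accommodate the $r^{7}$ (for $\Bfr$) and $r^{3}$ (for $\Ffr,A,\Xfr$) weights built into $B[\Bfr,\Ffr]$ and $B[A,\Xfr]$.

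The delicate piece is the $|\int_{\MM(0,\tau)}\nab_T\ov{\psi}_i\c N_i|$ term, since $\nab_T\pf$ is only controlled by $\Mor^{ax}$ with the trapping-degenerate weight $(1-r_P/r)^2$. For the $O(a^2 r^{-4})\pf$ part of $N_i$, we integrate by parts in $t$, using that $T=\partial_t$ is Killing so no bulk commutator term arises, and obtain $a^2$ times a boundary expression controlled by $\sup_\tau E[\pf,\qf](\tau)$. For the $L_i$ part, integration by parts in $t$ transfers $\nab_T$ onto $L_i$; the resulting $\nab_T L_i$ still involves at most one derivative of the Teukolsky variables, since the commutators of $T$ with angular and null-frame derivatives produce only lower-order coefficients in Kerr--Newman, and retains its overall $|a|$ factor, so the same weighted Cauchy--Schwarz as above yields a bound by $|a|\,B[\Bfr,\Ffr,A,\Xfr]$.

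The main obstacle is the bookkeeping of $r$-weights and $|a|$-powers: one must inspect each term in the explicit expressions for $L_1,L_2$ given in \cite{Giorgi7} to verify that the pairing of Kerr--Newman background coefficient with Teukolsky derivative indeed fits within the $r$-weighted spacetime norms $B[\Bfr,\Ffr]$ and $B[A,\Xfr]$, and that the overall $|a|$ factor is uniform in $r$. This careful matching was carried out in Proposition~3.5 of \cite{Giorgi9} for the full non-axisymmetric system with no restriction on the charge, and hence it transfers directly to the axisymmetric setting considered here.
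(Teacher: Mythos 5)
The paper does not supply a proof of this proposition; it is quoted verbatim from Proposition 3.5 of \cite{Giorgi9} (and the surrounding text explicitly says ``This control was obtained in two proposition in \cite{Giorgi9} that we recall here''). So there is no in-paper argument to compare against, and your write-up is necessarily a reconstruction. Its broad strategy -- splitting $N_i=O(a^2r^{-4})\psi_i+L_i$, exploiting the decoupling at $a=0$ to extract an overall $|a|$ factor from $L_i$, and then matching $r$-weights against $B[\Bfr,\Ffr,A,\Xfr]$ -- is the right skeleton, and your treatment of the Cauchy--Schwarz and $|N_i|^2$ pieces is sound.

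There is, however, a genuine gap in your handling of the term $\int_{\MM}\Re(\nab_T\ov{\psi_i}\cdot L_i)$. You propose integrating by parts in $t$ and claim that ``the resulting $\nab_T L_i$ still involves at most one derivative of the Teukolsky variables, since the commutators of $T$ with angular and null-frame derivatives produce only lower-order coefficients.'' This justification is incorrect: if $L_1$ contains a term of the form $c(r,\theta)\,D\Bfr$ with $D$ a first-order frame derivative, then
\[
\nab_T\big(c\,D\Bfr\big)=c\,D(\nab_T\Bfr)+c\,[\nab_T,D]\Bfr,
\]
and while the commutator $[\nab_T,D]$ is indeed lower order, the leading piece $c\,D(\nab_T\Bfr)$ is a \emph{second}-order derivative of $\Bfr$, which is not controlled by $B[\Bfr,\Ffr]$ or $B[A,\Xfr]$ (these contain only $\nab_3,\nab_4,\nab$ at first order and zeroth-order quantities). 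So after integration by parts the resulting bulk term $\int_{\MM}|\psi_1|\,|\nab_T L_1|$ cannot be closed within the stated norms, and the argument as written does not conclude. Some additional input is required -- for instance, using the transport structure of the Teukolsky system (cf.\ the transport estimate recalled as Proposition~\ref{lemma:crucial2}) to re-express $\nab_T$ of the Teukolsky variables in terms of first-order quantities, or a different treatment of the trapping-degenerate $\nab_T\psi_i$ factor that avoids integration by parts on the $L_i$ piece altogether. Without such an ingredient, this step does not stand on its own, and the deferral to \cite{Giorgi9} is carrying more weight than the outline acknowledges.
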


\begin{proposition}[Proposition 3.6 in \cite{Giorgi9}]\label{lemma:crucial2} Let $\Bfr, \Ffr, A, \Xfr$ be solutions of the Teukolsky system and let $\pf, \qf$ defined as their Chandrasekhar transformation. For $|a| \ll M$, the following transport estimate holds true\footnote{Observe that Proposition 3.6 in \cite{Giorgi9} was stated with the Morawetz bulk $\Mor[\psi_1, \psi_2](\tau_1, \tau_2)$, which is controlled by the axially symmetric one $\mbox{Mor}^{ax}[\psi_1, \psi_2](\tau_1, \tau_2)$, hence the statement.}:
 \bea
 \lab{eq:transportA}
 E[\Bfr, \Ffr, A, \Xfr] ( \tau)+B[\Bfr, \Ffr, A, \Xfr] (0, \tau)&\les&   E[\pf, \qf](\tau) +\mbox{Mor}^{ax}[\pf, \qf](0,\tau)  + E[\Bfr, \Ffr, A, \Xfr] (0).
 \eea
\end{proposition}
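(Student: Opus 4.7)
The plan is to prove this as a cascade of weighted null transport estimates, exploiting the fact that the Chandrasekhar-transformed quantities $\pf, \qf$ are schematically one derivative of $\Bfr, \Ffr$ (in the $e_3$ direction), and that $\Bfr, \Ffr$ themselves are one derivative (again in $e_3$) of the remaining Teukolsky variables $A, \Xfr$ through the underlying Bianchi identities. The estimate is therefore not of energy type, but is obtained by integrating transport equations along ingoing null cones, with carefully chosen $r^p$-weights tuned so that the resulting bulk term on the right-hand side matches the spacetime norm $\mbox{Mor}^{ax}[\pf, \qf]$ already controlled by the Morawetz step, and the boundary term at $\tau=0$ matches $E[\Bfr, \Ffr, A, \Xfr](0)$.

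The concrete steps I would carry out are the following. \textbf{(Step 1)} Recall from Part I / \cite{Giorgi7} the Chandrasekhar relations, which schematically take the form $\pf = c_1(r)\nab_3(r^{k_1}\Bfr) + \text{l.o.t.}(\Bfr,\Ffr,A,\Xfr)$ and $\qf = c_2(r)\nab_3(r^{k_2}\Ffr) + \text{l.o.t.}$, where the lower order terms are $O(|a|)$-coupled to the remaining variables. \textbf{(Step 2)} Invert these into transport equations of the form $\nab_3(r^{k_1}\Bfr) = c_1^{-1}\pf + \text{l.o.t.}$ and integrate along the incoming null cones from $\Sigma_0$ to $\Sigma_\tau$. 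Pairing the resulting identity with a weight $r^{p}$ (with $p$ chosen to produce $r^{8}$ in $E[\Bfr,\Ffr]$ and $r^{7}$ in $B[\Bfr,\Ffr]$) and applying Cauchy-Schwarz together with a Grönwall argument in the $r$-variable, one obtains
\begin{equation*}
E[\Bfr,\Ffr](\tau)+B[\Bfr,\Ffr](0,\tau) \,\lesssim\, E[\pf,\qf](\tau)+\mbox{Mor}^{ax}[\pf,\qf](0,\tau)+E[\Bfr,\Ffr,A,\Xfr](0)+|a|\,B[A,\Xfr](0,\tau).
\end{equation*}
\textbf{(Step 3)} Repeat the same transport argument at the next stage of the cascade: the Bianchi identities in Kerr-Newman give $\nab_3(r^{m_1}A)=r^{m_1}\Bfr+\text{l.o.t.}$ and $\nab_3(r^{m_2}\Xfr)=r^{m_2}\Ffr+\text{l.o.t.}$, and integration with the $r^3$ weight appearing in $E[A,\Xfr]$ and $B[A,\Xfr]$ produces $E[A,\Xfr](\tau)+B[A,\Xfr](0,\tau)\lesssim E[\Bfr,\Ffr](\tau)+B[\Bfr,\Ffr](0,\tau)+E[A,\Xfr](0)$. \textbf{(Step 4)} Combine Steps 2 and 3: the $|a|\,B[A,\Xfr]$ error from Step 2 is absorbed into the left-hand side by taking $|a|/M$ sufficiently small, yielding the claimed estimate. \textbf{(Step 5)} Commute the same transport equations with $\nab_4$, $\nab_3$ and $\nab$ to recover the derivative components of $E[\Bfr,\Ffr,A,\Xfr]$ and $B[\Bfr,\Ffr,A,\Xfr]$; since Kerr-Newman is stationary and axially symmetric the commutators produce only first-order terms that either fall into the same cascade or are $O(|a|)$.

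The main obstacle I expect is pinning down the precise interplay of $r$-weights at each layer of the cascade so that the $r^p$-transport estimate produces \emph{exactly} the weights appearing in the definitions of $E[\Bfr,\Ffr]$ (namely $r^{8}$), $B[\Bfr,\Ffr]$ ($r^{7}$) and $E[A,\Xfr]$, $B[A,\Xfr]$ ($r^{3}$), while still being compatible with the two-step inversion and with the coupling terms in the Chandrasekhar transformation that mix $\Bfr$ and $\Ffr$ through the charge $Q$. A secondary subtlety is ensuring that the $O(|a|)$ lower-order coupling terms in the Chandrasekhar transformation and the Bianchi identities can all be absorbed by the left-hand side under the smallness assumption $|a|\ll M$, without generating terms that require $|Q|\ll M$ — this is precisely the point where the analysis differs from an unweighted transport estimate and where the structural properties of the Kerr-Newman null frame must be used carefully.
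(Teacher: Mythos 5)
The paper does not actually reprove this proposition: it is quoted verbatim from Part I (Proposition 3.6 of \cite{Giorgi9}), and the only content added here is the observation in the footnote that $\Mor^{ax}[\pf,\qf](0,\tau)\gtrsim \Mor[\pf,\qf](0,\tau)$, so the version proved in Part I immediately implies the version stated here. Your reconstruction --- inverting the Chandrasekhar transformation into $\nab_3$-transport equations, integrating with tuned $r^p$-weights along ingoing cones, and cascading from $\pf,\qf$ down to $\Bfr,\Ffr$ and then to $A,\Xfr$, with the derivative components of the energies recovered by commutation and by using the Teukolsky equations themselves --- is indeed the strategy of the proof given in Part I, so the architecture of your argument is the right one.

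One point needs correcting, however. In Step 1 you describe the lower-order terms in the Chandrasekhar relations as ``$O(|a|)$-coupled to the remaining variables,'' and in Step 4 the only mechanism you invoke for closing the cascade is absorption of these couplings by taking $|a|/M$ small. In Kerr--Newman the gravitational and electromagnetic hierarchies are coupled through terms of size $O(Q)$ (indeed $\Bfr$, $\Ffr$, $\Xfr$ have no vacuum analogue, and the coupling survives at $a=0$ in Reissner--Nordstr\"om), and the proposition is explicitly stated with \emph{no} restriction on the charge, so these couplings cannot be absorbed by smallness of any parameter. They must instead be handled structurally: the transport relations have to be ordered so that each equation's $O(Q)$ source has already been estimated at an earlier stage of the cascade, and the relative $Q^2$-weighting between the two hierarchies (as in the combined energies of Section 3) has to be respected so that no inverse power of $Q$ is lost. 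Your closing paragraph shows you are aware that a $|Q|\ll M$ restriction must be avoided, but the absorption mechanism you actually write down does not cover the $O(Q)$ terms; as stated, Step 4 closes only the genuinely $O(|a|)$ part of the error.
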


Combining the two above proposition with \eqref{eq:energy-morawetz-with-lot} we finally deduce for $\frac{|a|}{M}$ sufficiently small
\bea
E[\pf, \qf](\tau)+\Mor^{ax}[\pf, \qf](0, \tau) &\les& E[\pf, \qf, \Bfr, \Ffr, A, \Xfr] (0),
\eea
which combined once again with \eqref{eq:transportA} gives 
 \beaa
E[\pf, \qf, \Bfr, \Ffr, A, \Xfr](\tau)+B[\pf, \qf, \Bfr, \Ffr, A, \Xfr] (0, \tau) \les  E[\pf, \qf, \Bfr, \Ffr, A, \Xfr](0),
       \eeaa
       therefore proving Theorem 
\ref{Thm:Nondegenerate-Morawetz}.

 \appendix

\section{The case of Reissner-Nordstr\"om}

In the case of $a=0$, the gRW system given by Theorem \ref{main-theorem-RW} reduces to the following coupled system of Regge-Wheeler equations:
\begin{eqnarray}
    \begin{split}
 \squared_1\pf -V_{1}  \pf &=&\frac{4Q^2}{r^2}   \ov{\DD} \c  \qf  \label{final-eq-1-RN}\\
\squared_2\qf -V_{2}  \qf &=&-  \frac{1}{2r^2}  \DD \hot  \pf   \label{final-eq-2-RN}        
    \end{split}
\end{eqnarray}
 where
  \begin{eqnarray*}
 V_{1}=\frac{1}{r^2}\big(1-\frac{2M}{r}+\frac{6Q^2}{r^2} \big), \qquad V_{2}=\frac{4}{r^2}\big(1-\frac{2M}{r}+\frac{3Q^2}{2r^2} \big).    
  \end{eqnarray*}

The system in Reissner-Nordstr\"om was first obtained in \cite{Giorgi7a}, where energy-Morawetz estimates were obtained for $|Q|<M$. Here, the choice of multiplier in Theorem \ref{Thm:Nondegenerate-Morawetz}
 gives an alternative proof of the energy-Morawetz estimates obtained in \cite{Giorgi7a}, which in a static spacetime are valid without restriction to axial symmetry.  We have

 \begin{theorem}
\lab{Thm:Nondegenerate-Morawetz-RN}
Let $\pf, \qf$ be solutions the Regge-Wheeler system \eqref{final-eq-1-RN} in Reissner-Nordstr\"om spacetime. Then, for $|Q| <M$  the following energy boundedness and integrated local energy decay estimates hold true:
 \bea\label{eq:final-estimate-theorem-RN}
E[\pf, \qf](\tau)+\Mor^{ax}[\pf, \qf] (0, \tau) \les  E[\pf, \qf](0).
       \eea
\end{theorem}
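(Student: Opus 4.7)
The plan is to apply the combined-current framework of Proposition \ref{prop:general-computation-divergence-P} specialized to $a=0$, using exactly the multiplier $(Y, w_Y + \delta_T w_T)$ introduced in Section \ref{sec:morawetz}, namely $Y=\FF(r)\partial_r$ with $\FF=zu$, $z=\Delta/r^4$, and $u, w_Y$ as in \eqref{eq:definition-u}--\eqref{eq:definition-w}. Here $r_P$ becomes the largest root of $r^3-3Mr^2+2Q^2r = r(r^2-3Mr+2Q^2)$, i.e.\ the Reissner-Nordstr\"om photon-sphere radius $r_P = \tfrac{1}{2}(3M+\sqrt{9M^2-8Q^2})$, which is well-defined for $|Q|<M$. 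When $a=0$ the analysis simplifies considerably: the first-order term $\mathscr{N}_{first}$ in \eqref{eq:definition-N-first} vanishes identically, the curvature contribution $\mathscr{R}^{(Y)}$ in \eqref{eq:mathsc-R-Y} vanishes (since both $\rhod + \etab\wedge\eta$ and the $a^3\cos\th(\sin\th)^2$ factor are zero in Reissner-Nordstr\"om), and the Poincar\'e inequality of Lemma \ref{lemma:poincareinequalityfornabonSasoidfh:chap6} together with the complex elliptic identities \eqref{eq:elliptic-estimates-psi1}--\eqref{eq:elliptic-estimates-psi2} hold without any $O(a)$ error. This is precisely why no axial-symmetry restriction is needed: the errors that had to be absorbed via $\Mor^{ax}$ in the rotating case are simply absent.

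For the Morawetz bulk I would invoke the positivity statement of Lemma \ref{lemma:positivity-RN} for $p_1 = 2/5$, $p_2 = 1/2$ and $|Q|<M$, which yields $A_1, A_2 > 0$ on $\{r\geq r_+\}$; the monotonicity $\partial_r(u/r^2)>0$ for $|Q|<M$ is immediate from \eqref{eq:definition-u}. Together these give a coercive spacetime norm controlling $r^{-3}\bigl(|\nab_{\pr_{r_*}}\pf|^2 + |\nab_{\pr_{r_*}}\qf|^2 + |\pf|^2 + |\qf|^2\bigr)$ together with angular derivatives degenerating only at $r=r_P$, i.e.\ the axial Morawetz bulk $\Mor^{ax}[\pf,\qf]$ (which, absent the $a$-dependent errors, now controls the full Morawetz bulk pointwise rather than only upon integration on spheres). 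The small Lagrangian correction $\delta_T w_T$ restores control of the trapped time-derivative as in Section \ref{sec:morawetz}, while the cross-term $\pf \c (\DDb \c \overline{\qf})$ appearing in $\mathscr{N}_{coupl}^{(Y,w_Y)}$ is handled exactly as in the main proof, except that the elliptic inequalities used to complete the square are now exact instead of modulo $O(a)$.

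For the energy estimate, apply \eqref{eq:divv-PP} with $(X,w)=(T,0)$. Since $T=\partial_t$ is globally timelike and Killing in the exterior of Reissner-Nordstr\"om, one has $\EE^{(T,0)}=0$ and $\mathscr{N}_{first}=0$, and the curvature term $\mathscr{R}^{(T)}$ vanishes; moreover, the coupling term $\mathscr{N}_{coupl}^{(T,0)}$ from \eqref{eq:N-coupl-1} reduces to pure divergences plus the commutator $[\DDb \c, \nabla_T]$, which is zero in a static spherically symmetric background. Consequently the resulting energy identity is \emph{unconditional}: no $|a|\Mor^{ax}$ term appears on the right-hand side, in sharp contrast with Proposition \ref{prop:energy-estimates-conditional}. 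Combining the unconditional energy bound with the Morawetz estimate from the previous paragraph, and noting that the pure Regge-Wheeler system \eqref{final-eq-1-RN}--\eqref{final-eq-2-RN} contains no lower-order Teukolsky source terms $L_1, L_2$, the estimate \eqref{eq:final-estimate-theorem-RN} closes with no further input. The main conceptual point — rather than a genuine obstacle — is the design of $u$ in \eqref{eq:definition-u} as an interpolation between the Schwarzschild multiplier $(r^2-r_P^2)$ and the extremal Reissner-Nordstr\"om multiplier $(r^3-r_P^3)/r$; this is precisely what produces a coercive Morawetz bulk without any Hardy inequality and uniformly down to the extremal limit $|Q|\to M$.
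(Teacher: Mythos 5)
Your proposal is correct and follows essentially the same route as the paper: the appendix proves Theorem \ref{Thm:Nondegenerate-Morawetz-RN} precisely by applying Proposition \ref{prop:energy-estimates-conditional} at $a=0$ (where the $|a|\Mor^{ax}$ term and the $N_i$-contributions drop out) together with the Morawetz bulk coercivity coming from Lemma \ref{lemma:positivity-RN} and the monotonicity $\partial_r(u/r^2)>0$. You spell out in more detail than the paper does why $\mathscr{N}_{first}$, $\mathscr{R}^{(Y)}$, the $O(a)$ errors in the Poincar\'e and elliptic estimates, and the commutator $[\DD\c,\nabla_T]$ all vanish when $a=0$, but the structure of the argument — same multiplier $(Y,w_Y+\delta_T w_T)$, same choice $p_1=2/5$, $p_2=1/2$, same polynomial positivity lemma, same observation that the system \eqref{final-eq-1-RN} carries no $L_1,L_2$ sources — is exactly the paper's.
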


 The above theorem is obtained by combining the energy estimates in Proposition \ref{prop:energy-estimates-conditional} for $a=0$ and the coercivity of the spacetime energy in  Morawetz estimates through the choice of functions given in \eqref{eq:definition-u}-\eqref{eq:definition-w}.
 Notice that the choice of functions that we have here, unlike the ones in \cite{Giorgi7a}, has the advantage of not using the spherical harmonics decomposition in the proof of coercivity and can therefore be adapted to the general case of Kerr-Newman. 
 Observe that, making a connection with the notations in \cite{Giorgi7a}, we have $z=r^{-2} \Upsilon$ for $\Up=1-\frac{2M}{r}+\frac{Q^2}{r^2}$ and $f=r^{-2} u$.

For $a=0$, since $\mathscr{N}_{first}^{( Y, w_Y)}[\psi_1,\psi_2], \mathscr{N}_{lot}^{(Y, w_Y)}[\psi_1,\psi_2], \mathscr{R}^{(Y)}[\psi_1, \psi_2]=0 $ the Morawetz divergence in \eqref{eq:divergence-theorem-identity-Y} from \eqref{eq:expression-I-II} reduces to
\beaa
\D^\mu \PP_\mu^{( Y, w_Y)}[\psi_1, \psi_2]&=& \EE^{(Y, w_Y)}[\psi_1, \psi_2] +\mathscr{N}_{coupl}^{( Y, w_Y)}[\psi_1,\psi_2]\\
&=&\Up^2 \partial_r\big( \frac{   u}{r^2}  \big) \big(  |\nab_r\psi_1|^2 + 8Q^2|\nab_r\psi_2|^2 \big)+A_1 |\psi_1|^2+8Q^2 A_2|\psi_2|^2 \\
  &&-\D_\mu \Re\Big[\frac{4Q^2}{r^2} \big( \psi_1 \c \big(\nabla_Y\ov{\psi_2} +\frac 1 2   w_Y \ov{\psi_2} \big)\big)^\mu -\frac{4Q^2\Up}{r^4}u\psi_1 \c  (\DD \c\ov{\psi_2}) \pr_r^\mu \Big],
\eeaa
 where 
 \begin{align*}
     A_1&= \frac{u}{r^5} \big( 1-\frac{3M}{r}+\frac{2Q^2}{r^2}  \big)   - \frac{8p_1^2Q^2}{p_2r^7}\frac{u}{  1-\frac{3M}{r}+\frac{2Q^2}{r^2}}\Big(1-\frac{3M}{r}+\frac{2Q^2}{r^2}+\frac 1 2  \Up  \Big)^2 \\
     &-  \frac 1 4 r^{-2} \pr_r\big(\De \pr_r w_Y \big)-\frac 1 2\frac{u}{r^2}  \pr_r \big(\Up  V_1\big), \\
     A_2&= \frac{2(1-p_2)u}{r^5} \big( 1-\frac{3M}{r}+\frac{2Q^2}{r^2}  \big) -\frac{4(1-p_1)^2Q^2}{r^7}\frac{u}{  1-\frac{3M}{r}+\frac{2Q^2}{r^2}}\Big(1-\frac{3M}{r}+\frac{2Q^2}{r^2}+\frac 1 2  \Up  \Big)^2 \\
     &-  \frac 1 4 r^{-2} \pr_r\big(\De \pr_r w_Y \big)-\frac 1 2 \frac{u}{r^2}  \pr_r \big(\Up  V_2\big).
 \end{align*}

We have the following.
\begin{lemma}\label{lemma:positivity-RN}
   If $p_1=\frac 2 5$ and $p_2=\frac 1 2$, then the expressions $A_1, A_2$ are positive for $r\geq r_+$ in Reissner-Nordstr\"om for $|Q|<M$.
\end{lemma}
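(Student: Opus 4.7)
The plan is a direct algebraic verification, substantially simplified by exploiting the factorization structure forced by the definitions. First, I would normalize by setting $M=1$, introducing $\mu = Q^2/M^2 \in [0,1)$, and using the relations $r_P^2 = 3r_P - 2\mu$ (from $\TT(r_P)=0$ with $a=0$, divided by $r_P$) to eliminate high powers of $r_P$. The central observation is that both the numerator and denominator of the dangerous second term share a common zero at $r=r_P$: writing
\[
u = (r-r_P)\,h(r), \qquad h(r) := (1-\mu)(r+r_P) + \tfrac{\mu}{r}(r^2 + r\,r_P + r_P^2),
\]
and $\Psi := 1 - 3/r + 2\mu/r^2 = (r-r_P)(r - r_P^-)/r^2$ with $r_P^- = (3 - \sqrt{9-8\mu})/2$, one checks $r_P^- < r_+$ for $\mu < 1$, so $u/\Psi = r^2 h(r)/(r - r_P^-)$ is smooth and strictly positive on $[r_+,\infty)$, and $h(r)>0$ there. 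Thus the apparent singularity in the second term of $A_i$ is removable, and the ``trapping factor'' $u\Psi/r^5 = (r-r_P)^2 h(r)(r-r_P^-)/r^7 \geq 0$ vanishes precisely to second order at $r_P$.

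Second, I would compute the building blocks explicitly. From $\partial_r u = 2r + \mu r_P^3/r^2$ one gets $w_Y = (\Up/r^2)\bigl(2r + \mu r_P^3/r^2\bigr)$, and then $\partial_r(\De\,\partial_r w_Y)$ and $\partial_r(\Up V_i)$ are rational functions of $r$ with parameters $\mu, r_P$. With $p_1 = 2/5$ and $p_2 = 1/2$ the coupling coefficients become $8p_1^2/p_2 = 64/25$ and $4(1-p_1)^2 = 36/25$. After multiplying $A_1$ and $A_2$ by the positive (on $r>r_P$) factor $25\,r^9\Psi = 25\,r^7(r-r_P)(r-r_P^-)$, one obtains polynomial expressions in $r$ (with parameters $\mu, r_P$ related by $r_P^2 = 3r_P - 2\mu$), whose non-negativity on $[r_+, \infty)$ is to be checked.

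Third, to demonstrate non-negativity, I would split the analysis according to the ansatz suggested by the multiplier. The ``Schwarzschild piece'' ($\mu=0$) gives $u = r^2 - 9$ and the choice $p_1=2/5$, $p_2=1/2$ degenerates to a multiplier equivalent to the one known to work (cf.\ \cite{DHR19}), so $A_1(r, \mu=0), A_2(r, \mu=0) \geq 0$ on $r \geq 2$. The ``extremal Reissner-Nordstr\"om piece'' ($\mu=1$) gives $u = (r^3-8)/r$ with $r_P=2M$, and this matches the known extremal multiplier of \cite{Aretakis11a, Aretakis11}. For intermediate $\mu$, the concrete idea is to isolate the ``critical'' point $r=r_P$, where the first term vanishes and positivity becomes a sharp algebraic inequality in $\mu$ alone: namely
\[
-\tfrac{1}{4}r_P^{-2}\,\partial_r(\De\,\partial_r w_Y)\big|_{r_P} - \tfrac{1}{2}\tfrac{u}{r_P^2}\,\partial_r(\Up V_i)\big|_{r_P} \;>\; \tfrac{8 p_1^2 Q^2}{p_2\,r_P^7} \cdot \tfrac{r_P^2 h(r_P)}{r_P - r_P^-} \cdot \tfrac{\Up(r_P)^2}{4},
\]
(and analogously for $A_2$ with the $A_2$-specific coefficients), and then to extend by continuity/monotonicity in $r$ away from $r_P$ using the manifestly positive first term.

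The main obstacle is the purely algebraic verification of the polynomial non-negativity, especially at the trapping radius $r_P$ where the leading positive term vanishes --- this is the step that forces the specific choice of $p_1 = 2/5$, $p_2 = 1/2$, since the negative coupling contribution scales as $(1-p_1)^2$ in $A_2$ and as $p_1^2/p_2$ in $A_1$, and these must both be simultaneously dominated by the $V_i$-potential contributions, uniformly in $\mu \in [0,1)$. In practice, this is a bounded computation that I would carry out with the aid of a computer algebra system, then certify the resulting univariate polynomial inequality in $\mu$ by a Sturm sequence or a sum-of-squares decomposition.
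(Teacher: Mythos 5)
Your overall strategy — normalizing, exploiting the factorization $u=(r-r_P)h(r)$ and $\Psi=(r-r_P)(r-\tilde r_P)/r^2$, recognizing the removable singularity, noting $\tilde r_P<r_+$, and ultimately reducing to a certifiable polynomial inequality in dimensionless variables — matches the paper's proof in spirit. However, your execution contains two genuine flaws.

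First, the proposed clearing factor $25\,r^9\Psi$ is \emph{not} positive on the full domain $[r_+,\infty)$: since $\Psi=(r-r_P)(r-\tilde r_P)/r^2$ and $\tilde r_P<r_+<r_P$, one has $\Psi<0$ on $[r_+,r_P)$, so multiplying $A_i$ by it there reverses the inequality you want. Moreover, multiplying by $\Psi$ introduces an extraneous factor $(r-r_P)$ that forces the product to vanish at $r=r_P$, destroying exactly the strict positivity you are trying to certify at the trapping radius. The paper instead clears only the denominator that actually appears, i.e., multiplies by $(1-\tilde x_p/x)$, which is positive on all of $[r_+,\infty)$ because $\tilde r_P<r_+$; this is the right normalization to use. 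Second, the plan to verify positivity only at $r=r_P$ (where the $u\Psi$-term vanishes) and then ``extend by continuity/monotonicity in $r$'' is not a complete argument: continuity only gives a neighborhood of $r_P$, and monotonicity of the remaining terms is neither asserted nor obvious, while the negative coupling term competes with both the first term and the $V_i$-terms over the whole range. The paper does not attempt such a local-to-global shortcut; it converts $(1-\tilde x_p/x)\,r^3A_i$ into explicit polynomials in $(x^{-1},x_p)$ using $\tilde x_p=3-x_p$ and $\lambda^2=(3x_p-x_p^2)/2$, and certifies positivity over the full two-parameter region $\{x\geq 1+\sqrt{(x_p-1)(x_p-2)/2},\,2<x_p\leq 3\}$ by computer algebra. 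Your proposed Sturm-sequence/SOS certification is reasonable in principle, but it must be carried out on this genuinely bivariate region, not just on a univariate slice at $r=r_P$.
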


\begin{proof}
    We denote $\tilde r_P= \frac{3M-\sqrt{9M^2-8Q^2}}{2}<r_H=M+\sqrt{M^2-Q^2}$, which is another root of $r^2-3Mr+Q^2=0$, other than $r_P=\frac{3M+\sqrt{9M^2-8Q^2}}{2}$. Through direct computations, we can write down
  \begin{eqnarray}
    \begin{split}    
&\frac{u}{r^5} \big( 1-\frac{3M}{r}+\frac{2Q^2}{r^2}  \big) =\frac{1}{r^3}  \left(1-\frac{\tilde r_P}{r} \right) \left(1-\frac{r_P}{r}\right)^2 \left(1+ \frac{ r_P}{r}+\frac{Q^2}{M^2} \frac{ r_P^2}{r^2}  \right),\\     
&\frac{Q^2}{r^7}  \frac{u}{  1-\frac{3M}{r}+\frac{2Q^2}{r^2}}\Big(1-\frac{3M}{r}+\frac{2Q^2}{r^2}+\frac 1 2  \Up  \Big)^2 = \frac{1}{r^3} \frac{Q^2}{r^2} \left(1+ \frac{ r_P}{r}+\frac{Q^2}{M^2} \frac{ r_P^2}{r^2}  \right)\frac{r}{r-\tilde r_P}  \left(\frac{3}{2}-\frac{4M}{r}+ \frac{5 Q^2}{2 r^2} \right)^2,\\
&-  \frac 1 4 r^{-2} \pr_r\big(\De \pr_r w_Y \big)= \frac{1}{r^3}
\left(\frac{3M}{r}-\frac{8M^2}{r^2}-\frac{4Q^2}{r^2}+\frac{15Q^2 M}{r^3}-\frac{6Q^4}{r^4} \right)\\
&\qquad\qquad\qquad\qquad\qquad+\frac{1}{r^3} \frac{Q^2}{M^2}\frac{r_P^3}{r^3} \left(-  3+\frac{18M}{r}  -\frac{25 M^2}{r^2} -\frac{25 Q^2}{2r^2}+\frac{33 Q^2 M}{r^3} -\frac{21 Q^4}{2r^4}  \right), \\
&-\frac 1 2 \frac{u}{r^2}  \pr_r \big(\Up  V_1\big)   = \frac{1}{r^3} \left(1-\frac{r_P}{r} \right)\left(1+ \frac{ r_P}{r}+\frac{Q^2}{M^2} \frac{ r_P^2}{r^2}  \right)  \left(1-\frac{6 M}{r}+\frac{8 M}{r^2}+\frac{14Q^2}{r^2}-\frac{35Q^2M}{r^3}+\frac{18Q^4}{r^4}\right),      \\
&-\frac 1 2 \frac{u}{r^2}   \pr_r \big(\Up  V_2\big)  =\frac{2}{r^3} \left(1-\frac{r_P}{r} \right)\left(1+ \frac{ r_P}{r}+\frac{Q^2}{M^2} \frac{ r_P^2}{r^2}  \right) \left(2-\frac{12 M}{r}+\frac{16 M^2}{r^2}+\frac{10Q^2}{r^2}-\frac{25Q^2 M}{r^3}+\frac{9Q^4}{r^4}\right).    
    \end{split}
    \end{eqnarray}   

With these expressions, we can express $r^3A_1$, $r^3A_2$ when $p_1=\frac 2 5$ and $p_2=\frac 1 2$ as follows, using dimensionless variables $x=r/M$ and $\lambda=|Q|/M$, $x_p=r_P/M=\frac{3}{2}+\sqrt{\frac{9}{4}-2\lambda^2}$, $\tilde{x}_p= \tilde {r}_p/M=\frac{3}{2}-\sqrt{\frac{9}{4}-2\lambda^2}$.   We have
\begin{equation}\label{eq:r3A1}
    \begin{split}
       r^3 A_1=&\left(1-\frac{\tilde x_p}{x} \right) \left(1-\frac{x_p}{x}\right)^2 \left(1+ \frac{ x_p}{x}+\frac{\lambda^2 x_p^2}{x^2}  \right)\\
       &-\frac{64}{25}\frac{\lambda^2}{x^2}\left(1+ \frac{ x_p}{x}+\frac{\lambda^2 x_p^2}{x^2}  \right)\frac{x }{x-\tilde{x}_p }\left(\frac{9}{4}-\frac{12}{x}+\frac{16}{x^2} +\frac{15\lambda^2}{2x^2}-\frac{20\lambda^2}{x^3}+\frac{25\lambda^4}{4x^4}\right)\\
        &+\left(\frac{3}{x}-\frac{8}{x^2}-\frac{4\lambda^2}{x^2}+\frac{15\lambda^2}{x^3}-\frac{6\lambda^4}{x^4} \right)+\lambda^2\frac{x_p^3}{x^3} \left(-  3+\frac{18}{x}  -\frac{25 }{x^2} -\frac{25 \lambda^2}{2x^2}+\frac{33 \lambda^2}{x^3} -\frac{21 \lambda^4}{2x^4}  \right)   \\
        &+ \left(1-\frac{x_p}{x} \right)\left(1+ \frac{x_p}{x} + \frac{\lambda^2 x_p^2}{x^2} \right)    \left(1-\frac{6}{x}+\frac{8}{x^2}+\frac{14\lambda^2}{x^2}-\frac{35\lambda^2}{x^3}+\frac{18\lambda^4}{x^4}\right) ,   
    \end{split}
\end{equation}
and
\begin{equation}\label{eq:r3A2}
    \begin{split}
      r^3  A_2 =& \left(1-\frac{\tilde x_p}{x} \right) \left(1-\frac{x_p}{x}\right)^2 \left(1+ \frac{ x_p}{x}+\frac{\lambda^2 x_p^2}{x^2}  \right)\\
      &-\frac{36}{25}\frac{\lambda^2}{x^2}\left(1+ \frac{ x_p}{x}+\frac{\lambda^2 x_p^2}{x^2}  \right)\frac{x }{x-\tilde{x}_p }\left(\frac{9}{4}-\frac{12}{x}+\frac{16}{x^2} +\frac{15\lambda^2}{2x^2}-\frac{20\lambda^2}{x^3}+\frac{25\lambda^4}{4x^4}\right)\\
        &+\left(\frac{3}{x}-\frac{8}{x^2}-\frac{4\lambda^2}{x^2}+\frac{15\lambda^2}{x^3}-\frac{6\lambda^4}{x^4} \right)+\lambda^2\frac{x_p^3}{x^3} \left(-  3+\frac{18}{x}  -\frac{25 }{x^2} -\frac{25 \lambda^2}{2x^2}+\frac{33 \lambda^2}{x^3} -\frac{21 \lambda^4}{2x^4}  \right)   \\
        &+2\left(1-\frac{x_p}{x} \right)\left(1+ \frac{x_p}{x} + \frac{\lambda^2 x_p^2}{x^2} \right) \left(2-\frac{12}{x}+\frac{16}{x^2}+\frac{10\lambda^2}{x^2}-\frac{25\lambda^2}{x^3}+\frac{9\lambda^4}{x^4}\right) .
    \end{split}
\end{equation}

 Notice that in exterior of sub-extremal case, $r_+>\tilde{r}_p$ always holds true. Therefore, the positivity of \eqref{eq:r3A1} and \eqref{eq:r3A2} for $r\geq r_+$ and $|Q|<M$ is equivalent to the positivity of $\left(1-\frac{\tilde x_p}{x} \right)r^3 A_1$ and  $\left(1-\frac{\tilde x_p}{x} \right)r^3 A_2$ for $r\geq r_+$ and $|Q|<M$. 
    
 Making use of the fact that $\tilde x_p=3-x_p$ and $\lambda^2=\frac{3x_p-x_p^2}{2}$ to replace all the $\tilde x_p$  and $\lambda$ in the expressions, then $\left(1-\frac{\tilde x_p}{x} \right)r^3 A_1$ and  $\left(1-\frac{\tilde x_p}{x} \right)r^3 A_2$ can be expanded as polynomial expressions\footnote{A Mathematica notebook containing the verifications of these expressions is included as an ancillary arXiv file.} of $x^{-1},x_p$:
\begin{equation}\label{eq:r3A1-poly}
    \begin{split}
\left(1-\frac{\tilde x_p}{x} \right) r^3 A_1 
=&2- \frac{12}{x} + \frac{2 x_p}{x} + \frac{18}{x^2}  + \frac{84 x_p}{
 25 x^2}  - \frac{128 x_p^2}{25 x^2}  + \frac{3 x_p^3}{x^2} - \frac{x_p^4}{x^2} 
\\
&- \frac{948 x_p}{25 x^3}+ \frac{37 x_p^2}{x^3} - \frac{1381 x_p^3}{50 x^3} + \frac{3 x_p^4}{x^3}+ \frac{3 x_p^5}{2 x^3}\\
 &+ \frac{714 x_p}{25 x^4} - \frac{1116 x_p^2}{25 x^4} + \frac{2147 x_p^3}{50 x^4}  + \frac{1531 x_p^4}{25 x^4} - \frac{984 x_p^5}{25 x^4} + \frac{253 x_p^6}{50 x^4}\\
 &- \frac{81 x_p^2}{25 x^5}  + \frac{3099 x_p^3}{50 x^5} - \frac{31553 x_p^4}{100 x^5} + \frac{29999 x_p^5}{200 x^5} - \frac{391 x_p^6}{50 x^5} - \frac{25 x_p^7}{8 x^5} \\
 &- \frac{963 x_p^3}{10 x^6} + \frac{33429 x_p^4}{100 x^6} + \frac{14363 x_p^5}{200 x^6}  - \frac{41563 x_p^6}{200 x^6} + \frac{2841 x_p^7}{40 x^6} - \frac{279 x_p^8}{40 x^6} \\
 &+ \frac{135 x_p^4}{2 x^7}  - \frac{10719 x_p^5}{20 x^7} + \frac{36441 x_p^6}{80 x^7} - \frac{9857 x_p^7}{80 x^7} + \frac{443 x_p^8}{80 x^7}  + \frac{21 x_p^9}{16 x^7} \\
 &+ \frac{3321 x_p^6}{16 x^8}   - \frac{
 1107 x_p^7}{4 x^8}  + \frac{1107 x_p^8}{8 x^8}  - \frac{123 x_p^9}{4 x^8}+ \frac{
 41 x_p^{10}}{16 x^8}
    \end{split}
\end{equation}
and
\begin{equation}\label{eq:r3A2-poly}
    \begin{split}
\left(1-\frac{\tilde x_p}{x} \right) r^3 A_2
=& 5 -\frac{39}{x} +\frac{5 x_p}{x}  + \frac{96}{x^2} -\frac{ 93 x_p}{50x^2}-\frac{619 x_p^2 }{50 x^2}  + \frac{15 x_p^3}{2 x^2} - \frac{5 x_p^4}{2 x^2} \\
&- \frac{72}{x^3} - \frac{4179 x_p}{50 x^3}  + \frac{100 x_p^2}{x^3} - \frac{3769 x_p^3}{50 x^3}+ \frac{33 x_p^4}{2x^3} + \frac{3 x_p^5}{2 x^3}\\
 &+ \frac{6147 x_p}{50 x^4} - \frac{8893 x_p^2}{50 x^4} + \frac{4264 x_p^3}{25 x^4} + \frac{6001 x_p^4}{100 x^4}- \frac{3057 x_p^5}{50 x^4} + \frac{869 x_p^6}{100 x^4} \\
 &+ \frac{1131 x_p^2}{25 x^5} + \frac{1151 x_p^3}{50 x^5} - \frac{12718 x_p^4}{25 x^5} + \frac{55751 x_p^5}{200 x^5} - \frac{2843 x_p^6}{100 x^5} - \frac{25 x_p^7}{8 x^5} \\
 &- \frac{7623 x_p^3}{40 x^6} + \frac{127767 x_p^4}{200 x^6} - \frac{1547 x_p^5}{50 x^6} - \frac{24981 x_p^6}{100 x^6} + \frac{3759 x_p^7}{40 x^6}- \frac{381 x_p^8}{40 x^6} \\
 &+ \frac{729 x_p^4}{8 x^7}     - \frac{29457 x_p^5}{40 x^7} + \frac{51219 x_p^6}{80 x^7}- \frac{14643 x_p^7}{80 x^7} + \frac{967 x_p^8}{80 x^7}+ \frac{21 x_p^9}{16 x^7} \\
 &+ \frac{243 x_p^6}{x^8}   - \frac{324 x_p^7}{x^8}  + \frac{162 x_p^8}{x^8}  - \frac{36 x_p^9}{x^8} + \frac{3 x_p^{10}}{x^8}.
    \end{split}
\end{equation}

It is straightforward to verify that 
\[\Bigg\{x\geq 1+\sqrt{1-\lambda^2},\quad 0\leq \lambda <1   \Bigg\}= \Bigg\{x\geq 1+\sqrt{\frac{(x_p-1)(x_p-2)}{2}}, \quad 2< x_p \leq 3 \Bigg\}\]

Using standard calculus techniques it is possible to verify that\footnote{We refer the reader to the Mathematica notebook for the verification.} 
 that \eqref{eq:r3A1-poly} and \eqref{eq:r3A2-poly} are positive for $x\in \left[1+\sqrt{\frac{(x_p-1)(x_p-2)}{2}},\infty  \right)$ and $x_p\in (2,3]$.   This means $r^3A_1>0, r^3 A_2>0$ if $r\geq r_+$ and $|Q|<M$ as desired.

\end{proof}

\begin{remark} The expressions \eqref{eq:r3A1} and \eqref{eq:r3A2} depend on both $x$ and $\lambda$ in the exterior region of full sub-extremal case. We can examine the positivity of them in Schwarzschild ($Q=0$) and extremal Reissner-Nordstr\"om ($Q=M$) cases explicitly.
    
\begin{enumerate}
    \item In the case of Schwarzschild, we have $\lambda=0$, $x_p=3$ and $\tilde{x}_p=0$, so \eqref{eq:r3A1} and \eqref{eq:r3A2} reduce to
    \begin{equation}
 \begin{split}
      r^3  A_1^{(\text{Schw})}=& \left(1-\frac{3}{x}\right)^2 \left(1+\frac{3}{x}\right)
        +\left ( \frac{3}{x} -\frac{8}{x^2}\right) + \left(1-\frac{3}{x} \right)\left(1+ \frac{3}{x}  \right)    \left(1-\frac{6}{x}+\frac{8}{x^2}\right)\\
        =&2 -\frac{6}{x} -\frac{18}{x^2} +\frac{81}{x^3}-\frac{72}{x^4},
    \end{split}        
    \end{equation}
    and
   \begin{equation}
 \begin{split}
     r^3   A_2^{(\text{Schw})}=& \left(1-\frac{3}{x}\right)^2 \left(1+\frac{3}{x}\right)
        +\left( \frac{3}{x} -\frac{8}{x^2} \right)+ 2\left(1-\frac{3}{x} \right)\left(1+ \frac{3}{x}  \right)    \left(2-\frac{12}{x}+\frac{16}{x^2}\right)\\
        =&5 -\frac{24}{x} -\frac{21}{x^2} +\frac{243}{x^3}-\frac{288}{x^4}.
    \end{split}        
    \end{equation}  

  These are single variable functions and using standard calculus techniques it is  straightforward to verify that $r^3 A_1^{(\text{Schw})}>0$ and $r^3 A_2^{(\text{Schw})}>0$
  for $x\geq 2$, i.e. $r\geq r_+^{(\text{Schw})}=2M$.
    \item In the case of extremal Reissner-Nordstr\"om, we have $\lambda=1$, $x_p=2$ and $\tilde{x}_p=1$, so \eqref{eq:r3A1} and \eqref{eq:r3A2} reduce to
    \begin{equation}
 \begin{split}
     r^3   A_1^{(\text{eRN})}=&\left(1-\frac{1}{x} \right) \left(1-\frac{2}{x}\right)^2 \left(1+ \frac{ 2}{x}+\frac{  2^2}{x^2}  \right)-\frac{64}{25}\frac{1}{x^2}\frac{x^2+2 x +   2^2}{x^2- x}\left(\frac{9}{4}-\frac{12}{x}+\frac{16}{x^2} +\frac{15 }{2x^2}-\frac{20 }{x^3}+\frac{25}{4x^4}\right)\\
        &+\left(\frac{3}{x}-\frac{4}{x^2}-\frac{8}{x^2}+\frac{15 }{x^3}-\frac{6 }{x^4} \right)+\frac{2^3}{x^3} \left(-  3 +\frac{18 }{x} -\frac{25 }{x^2} -\frac{25  }{2x^2} +\frac{33  }{x^3} -\frac{21 }{2x^4}  \right)   \\
        &+ \left(1-\frac{2}{x} \right)\left(1+ \frac{2}{x} + \frac{  2^2}{x^2} \right)    \left(1-\frac{6}{x}+\frac{8}{x^2}+\frac{14 }{x^2}-\frac{35 }{x^3}+\frac{18 }{x^4}\right)\\
        =& \frac{2 (x - 1) (25 x^6 - 50 x^5 + 28 x^4 - 554 x^3 + 2192 x^2 - 3390 x + 2050)}{25 x^7},
    \end{split}        
    \end{equation}
    and
   \begin{equation}
 \begin{split}
    r^3    A_2^{(\text{eRN})}=&\left(1-\frac{1}{x} \right) \left(1-\frac{2}{x}\right)^2 \left(1+ \frac{ 2}{x}+\frac{2^2}{x^2}  \right)-\frac{36}{25}\frac{1}{x^2}\frac{x^2+2 x +  2^2}{x^2-  x}\left(\frac{9}{4}-\frac{12}{x}+\frac{16}{x^2} +\frac{15 }{2x^2}-\frac{20 }{x^3}+\frac{25}{4x^4}\right)\\
        &+\left(\frac{3}{x}-\frac{4 }{x^2}-\frac{8}{x^2}+\frac{15 }{x^3}-\frac{6 }{x^4} \right)+\frac{2^3}{x^3} \left(-  3 +\frac{18}{x} -\frac{25 }{x^2} -\frac{25 }{2x^2} +\frac{33 }{x^3} -\frac{21 }{2x^4}  \right)   \\
        &+2\left(1-\frac{2}{x} \right)\left(1+ \frac{2}{x} + \frac{2^2}{x^2} \right) \left(2-\frac{12}{x}+\frac{16}{x^2}+\frac{10 }{x^2}-\frac{25}{x^3}+\frac{9 }{x^4}\right) \\
        =&\frac{(x - 1) (125 x^6 - 475 x^5 + 494 x^4 - 1792 x^3 + 7391 x^2 - 10270 x + 4800)}{25 x^7}
    \end{split}        
    \end{equation}  
    Once again, these are single variable functions and it is straightforward to verify that $r^3 A_1^{(\text{eRN})}>0$ and $r^3 A_2^{(\text{eRN})}>0$
    whenever $x>1$, i.e. $r>r_+^{(\text{eRN})}=M$. Here, the degeneracy along the event horizon, corresponding to the factor $(x-1)$, is an expected consequence of the Aretakis instability \cite{Aretakis11}.
\end{enumerate}

\end{remark}

\small

\end{document}